\newcommand{\powerset}[1]{\mathcal{P}(#1)}
\newcommand\ldiaarg[1]{\langle#1\rangle}
\newcommand{\LL}{\mathcal{L}}
\newcommand{\cP}{\mathcal{P}}
\newcommand{\hatK}{\widehat{K}}
\newcommand{\PAL}{\ensuremath{\mathsf{PAL}}\xspace}
\newcommand{\DEL}{\ensuremath{\mathsf{DEL}}\xspace}
\newcommand{\EL}{\ensuremath{\mathsf{EL}}\xspace}
\newcommand{\Tr}{\mathsf{true}}
\newcommand{\Fa}{\mathsf{false}}
\newcommand{\tr}{\mathsf{tr}}
\newcommand{\PSPACE}{\mathsf{PSPACE}}
\newcommand{\union}{\cup}
\newcommand{\outof}{\mathsf{of}}
\newcommand{\goto}{\mathsf{goto}}
\newcommand{\change}{\mathsf{change}}
\DeclareMathOperator*{\bigsem}{;}
\newtheorem{theorem}{Theorem}[section]
\newtheorem{corollary}[theorem]{Corollary}
\newtheorem{lemma}[theorem]{Lemma}
\newtheorem{definition}[theorem]{Definition}
\newtheorem{example}[theorem]{Example}
\newtheorem{conjecture}[theorem]{Conjecture}
\newcommand{\twopartdef}[4]{
  \left\{
    \begin{array}{@{}ll}
      #1 & #2 \\
      #3 & #4
    \end{array}
  \right.
}
\newcommand{\F}{\ensuremath{\mathcal{F}}}% knowledge or belief structure
\newcommand{\X}{\ensuremath{\mathcal{X}}}% transformer
\newcommand{\chk}{\mathsf{check}}
\newcommand{\chkfcdelK}{\mathsf{checkDELK}}
\renewcommand{\phi}{\varphi}
\title{Comparing State-Representations for DEL Model Checking}
\author{Gregor Behnke\qquad Malvin Gattinger\qquad Haitian Wang
\institute{ILLC, University of Amsterdam\\
Amsterdam, The Netherlands}
\email{g.behnke@uva.nl\qquad malvin@w4eg.eu\qquad\qquad h.wang12@uva.nl}
\and
Avijeet Ghosh
\institute{Chennai Mathematical Institute\\
Chennai, India}
\email{avi.ghosh23@gmail.com}
}
\begin{document}
\maketitle

\begin{abstract}
  Model checking with the standard Kripke models used in (Dynamic) Epistemic Logic leads to scalability issues.
  Hence alternative representations have been developed, in particular symbolic structures based on Binary Decision Diagrams (BDDs) and succinct models based on mental programs.
  While symbolic structures have been shown to perform well in practice, their theoretical complexity was not known so far.
  On the other hand, for succinct models model checking is known to be PSPACE-complete, but no implementations are available.

  We close this gap and directly relate the two representations.
  We show that model checking DEL on symbolic structures encoded with BDDs is also PSPACE-complete.
  In fact, already model checking Epistemic Logic without dynamics is PSPACE-complete on symbolic structures.
  We also provide direct translations between BDDs and mental programs.
  Both translations yield exponential outputs.
  For the translation from mental programs to BDDs we show that no small translation exists.
  For the other direction we conjecture the same.
\end{abstract}

\section{Introduction}

Reasoning about knowledge and its representation has become an important line of study in Artificial Intelligence, most importantly in Epistemic Planning~\cite{BelleSpecialIssue}.
A widely used logical framework to reason about the knowledge of multiple intelligent agents is Epistemic Logic, a version of Modal Logic~\cite{ModalLogicBlackburnRV01}.
Its standard semantics uses Kripke models with possible worlds to represent what agents know.
Moreover, knowledge can change.
For example in a card game, a player may announce its card to a subset of players privately.
Then the knowledge of all players changes: some players come to know the card, and other players may still observe that such a private announcement is made.
To formalise and reason about such knowledge updates,
a widely used framework is Dynamic Epistemic Logic ($\DEL$)~\cite{BMS1998:PAckPS,DELbook},
of which the simplest version is Public Announcement Logic ($\PAL$)~\cite{PALPlaza07,PALComplexityLutz06}.

One way to make $\DEL$ and $\PAL$ practically useful is \emph{model checking}:
given a model and a formula, decide whether the formula holds.
It is known that on Kripke models the computational complexity of this task for $\PAL$ is in $P$ and for $\DEL$ is in $\PSPACE$~\cite{ComplexityDELAucherS13}. Model-checkers for \PAL have been used in verifying protocols with security applications like the Russian card problem~\cite{vDvHvdMJ2006:RusCards,LanDu2017:PerfSecCom}.
Moreover, \DEL model checkers like SMCDEL~\cite{GattingerThesis2018} can be used in combination with large language models (LLMs) in order to improve the explainability and reasoning ability compared to plain LLMs~\cite{BelleTang2024:ToMLM}.

Model checking has scalability issues, because Kripke models enumerate all possible situations.
For example, every possible way to distribute cards among players needs a possible world, leading to an exponentially large model.
Different techniques have been developed to represent Kripke models more compactly, to make epistemic model-checking easier.
Here we focus on two such works for \DEL:
\begin{itemize}
\item Symbolic structures
  based on Binary Decision Diagrams (BDDs)~\cite{vBEGS2018:SymDEL,GattingerThesis2018,MG2023:ZDD-DEL-TARK}.
  This approach uses the famous data structure from \cite{Bryant86} and is similar to techniques for temporal logics~\cite{ChakiBDDbasedMC2018}.
\item Succinct models
  based on mental programs (MPs)~\cite{CharSucc2017}, a variant of PDL~\cite{FL1979:PDL}, and similar to regular languages.
  In particular these programs can encode the epistemic accessibility relations.
\end{itemize}

\begin{example}
To illustrate the different representations of Kripke models, consider the model below.
It has four worlds and two agents.
Agent A considers all $q$-worlds possible and knows nothing else, whereas Agent B only considers $p$-worlds possible and knows the value of $q$.
Next to the model we show how these relations are encoded by BDDs and by mental programs.
We note that one relation results in a relatively small BDD, while the other is more suitable to be encoded as a short mental programs.
\newcommand{\aliceBDD}{
  \begin{tikzpicture}[>=latex, node distance=13mm]
    \node (p0) [draw,circle] {$q'$};
    \node (p1) [draw,rectangle, below left of=p0] {$\top$};
    \node (p2) [draw,rectangle, below right of=p0] {$\bot$};
    \draw (p0) [->,solid] -- (p1);
    \draw (p0) [->,dashed] -- (p2);
    \node (top) [above of=p0, node distance=32mm] {$q'$};
  \end{tikzpicture}}
\newcommand{\bobBDD}{
  \begin{tikzpicture}[>=latex, node distance=13mm]
    \node (p) [draw,circle] {$p'$};
    \node (q) [draw,circle,below left of=p] {$q$};
    \node (Tq2) [draw,circle,below left of=q] {$q'$};
    \node (Fq2) [draw,circle,below right of=q,xshift=-2mm] {$q'$};
    \node (T) [draw,rectangle,below of=Tq2] {$\top$};
    \node (F) [draw,rectangle,below of=Fq2] {$\bot$};
    \draw (p0) edge [->,dashed,bend left=30] (F);
    \draw (p0) [->] -- (q);
    \draw (q) [->] -- (Tq2);
    \draw (q) [->,dashed] -- (Fq2);
    \draw (Tq2) [->] -- (T);
    \draw (Tq2) [->,dashed] -- (F);
    \draw (Fq2) [->] -- (F);
    \draw (Fq2) [->,dashed] -- (T);
    \node (top) [above of=q, node distance=2cm] {$p' \land (q \leftrightarrow q')$};
  \end{tikzpicture}}
\begin{center}
\begin{tikzpicture}[>=latex, node distance=2cm, baseline=(p.south), every edge quotes/.style={fill=white,font=\footnotesize}]
  \node (pq) [draw,circle,] {$p, q$};
  \node (p) [draw,circle,minimum width=9mm,right of=pq] {$p$};
  \node (q) [draw,circle,minimum width=9mm,below of=pq] {$q$};
  \node (e) [draw,circle,minimum width=9mm,right of=q]  {$$};
  % Alice:
  \draw (pq) edge [red, loop left] node {A} (pq);
  \draw (pq) edge [red, <->, bend left=10] node [right] {A} (q);
  \draw (q) edge [red, loop left] node {A} (q);
  \draw (p) edge [red, ->] node [above] {A} (pq);
  \draw (e) edge [red, ->] node [auto,sloped,xshift=2mm,yshift=-0.5mm] {A} (pq);
  \draw (p) edge [red, ->] node [auto,sloped,xshift=2mm,yshift=-0.5mm] {A} (q);
  % Bob:
  \draw (pq) edge [blue, loop above] node {B} (pq);
  \draw (p) edge [blue, loop above] node {B} (p);
  \draw (q) edge [blue, ->, bend left=10] node [left] {B} (pq);
  \draw (e) edge [blue, ->] node [right] {B} (p);
\end{tikzpicture}
\hspace{1em}
\begin{tabular}{l|>{\centering\arraybackslash}m{42mm}>{\centering\arraybackslash}m{42mm}lll}
  \ & \textcolor{red}{Agent A} & \textcolor{blue}{Agent B} \\
  \midrule
  BDD & \aliceBDD & \bobBDD & \\[7em]
  MP & $(p\leftarrow \bot \cup p \leftarrow \top) ; (q \leftarrow \top)$ & $ p \leftarrow \top$ \\
\end{tabular}
\end{center}
\end{example}

Both approaches lead to a new model-checking problem, because when moving from one representation to another, the complexity of model checking can differ.
Interestingly, for the succinct representation theoretical results about the model checking complexity are available, but not yet for the symbolic structures.
With the present article we close this gap.
We summarise the known and new results in \Cref{tab:overview}. Note that when using symbolic structures, model checking \EL, \PAL and \DEL all become equally hard. We discuss why this happens after we introduce knowledge structures in \Cref{def-kns}.

\begin{table}[H]
  \centering
  \begin{tabular}{l|lll}
    Representation & Kripke models                        & Symbolic Structures            & Succinct models                 \\
    \midrule
    \EL-S5         & in P~\cite{rakhalpern1995}           & PSPACE: \autoref{thm:SymPA}    & -                               \\
    \PAL-S5        & in P~\cite{DELbook}                  & PSPACE: \autoref{thm:SymPA}    & -                               \\
    \DEL-S5        & PSPACE~\cite{dHvdP:CompDELS5}        & PSPACE: \autoref{thm:SymDELS5} & -                               \\
    \EL-K          & in P~\cite{rakhalpern1995}           & PSPACE: \autoref{thm:SymDEL}   & PSPACE~\cite{DLPABalbianiHST14} \\
    \PAL-K         & in P~\cite{ModalLogicBlackburnRV01}  & PSPACE: \autoref{thm:SymDEL}   & PSPACE~\cite{CharSucc2017}      \\
    \DEL-K         & PSPACE~\cite{ComplexityDELAucherS13} & PSPACE: \autoref{thm:SymDEL}   & PSPACE~\cite{CharSucc2017}      \\
  \end{tabular}
  \caption{Model checking complexity results. We write ``PSPACE'' for PSPACE-complete here.}\label{tab:overview}
\end{table}

While the \EL and \PAL rows in \autoref{tab:overview} show an easier complexity for Kripke models than for symbolic structures, the key difference lies in input size: symbolic structures can represent Kripke models exponentially more succinctly. Thus, polynomial-time algorithms for Kripke models may become exponential when measured against symbolic input.
On the other hand, the \DEL rows show that symbolic structures do not lead to an additional complexity blowup.
The results suggest that one should not translate Kripke models to other representations, but work directly with the symbolic or succinct representation, i.e.\ a given system description should be formalized directly into a structure with BDDs or mental programs, to avoid ever using exponential memory that would be needed by a Kripke model.

The symbolic structures have been implemented and benchmarks show that they outperform standard Kripke models, but the succinct representations have not been implemented, with the exception of~\cite{Hartlief2020}.
To better understand the relation between the different encodings of relations, we provide translations from BDDs to mental programs and vice versa, and study the translation complexity.

In the remainder of this section we define the languages.
In \autoref{sec:symbolic} we prove that symbolic model checking is PSPACE-complete, first for \PAL-S5, then for \DEL-K.
In \autoref{sec:succinct} we recall mental programs used in succinct DEL, 
and in \autoref{sec:comparison} we provide translations between BDDs and mental programs.
We conclude with comments about multi-pointed models and ongoing implementation work in \autoref{sec:conclusion}.

\begin{definition}
    Throughout this article we work with finite vocabularies $V$ that come with an order.
    For any vocabulary $V$ we define \emph{Boolean formulas} $\beta \in \LL_B(V)$ by
    $\beta ::= \top \mid \bot \mid p \mid \lnot\beta \mid \beta \land \beta$
    where $p \in V$.
    A \emph{state} is a subset of the vocabulary $s \subseteq V$.
    That is, we identify states with the atomic propositions that are true at them.
    The Boolean semantics are defined as usual and denoted by $s \vDash \beta$.
\end{definition}

To work with both \PAL and \DEL, we will use a general language definition with a parameter set $D$.

\begin{definition}
For any vocabulary $V$
and any set $D$
the language $\mathcal{L}_{D}(V)$ is given by
$\phi ::= \top \mid \bot \mid p \mid \lnot \phi \mid \phi \land \phi \mid K_i \phi \mid [d]\phi$
where $p \in V$, $i \in A$ and $d \in D$.
Concretely, $\LL_\EL$ is given by $D = \varnothing$,
$\LL_\PAL$ is given by $D = \{ !\phi \mid \phi \in \mathcal{L}_\PAL \}$,
and $\LL_\DEL$ is given by letting $D$ be the set of events from \autoref{d:trf}.
\end{definition}

The simpler logic we consider is Public Announcement Logic (\PAL), where the only actions are truthful public announcements, i.e.~$D$ is the set of all formulas, used as announcements.
More general actions include private announcements or factual (also called ontic) changes.
In \DEL these are usually modeled by letting $D$ be the set of all pointed action models~\cite{BMS1998:PAckPS,DELbook}, but here we will mostly use the transformers from~\cite{vBEGS2018:SymDEL}.
Yet another option are the succinct event models from~\cite{CharSucc2017}.
All these languages come with a mutual but well-founded recursion: elements of $D$ are made using formulas, and formulas use elements of $D$.
We refer to~\cite[Chapter~6]{DELbook} for details.

The following definition will be relevant to determine the input size of model checking problems.

\begin{definition}\label{def:sizeOfFormula}
  We define the length $|\cdot| \colon \LL_{\PAL}(V) \cup \LL_{\DEL}(V) \to \mathbb{N}$ of formulas as follows
  \[
    \begin{array}{lll}
      |\top| & := & 1 \\
      |\bot| & := & 1 \\
      |p| & := & 1 \\
    \end{array}
    \hspace{1cm}
    \begin{array}{lll}
      |\lnot\phi| & := & |\phi| + 1\\
      |\phi_1 \land \phi_2| & := & |\phi_1| + |\phi_2| + 1 \\
      |K_i \phi| & := & |\phi| + 1\\
    \end{array}
    \hspace{1cm}
    \begin{array}{lll}
      |[!\psi]\phi| & := & |\psi| + |\phi| + 1 \\
      |[\X,x]\phi| & := & |\X| + |\phi| + 1 \\
      \ \\
    \end{array}
  \]
  where $|\X|$ denotes the size of a transformer as in \autoref{defi:sizeOftrf} below.
\end{definition}

\section{Symbolic Structures with BDDs}\label{sec:symbolic}

We recall the definition and main features of the knowledge and belief structures from~\cite{vBEGS2018:SymDEL,GattingerThesis2018}.
In general we assume that elements of $\LL_B$ are represented using Binary Decision Diagrams (BDDs) from~\cite{Bryant86}.

\begin{definition}\label{def-kns}
  A \emph{knowledge structure} is a tuple $\F = (V,\theta,O)$ where
  $V$ is a finite set called \emph{vocabulary},
  $\theta \in \LL_B(V)$ is the \emph{state law} and
  $O_i \subseteq V$ for each $i \in A$ are the \emph{observables}.
  A \emph{state} of $\F$ is a $s \subseteq V$ such that $s \vDash \theta$.
  A \emph{pointed} knowledge structure is a tuple $(\F,s)$ where $s$ is a state of $\F$.
\end{definition}

Knowledge structures encode the knowledge of agents (that in Kripke models is represented by a relation) with a set of observables.
Each $O_i\subseteq V$ encodes a relation $R_i \subseteq \cP(V) \times \cP(V)$ given by $R_i s t \iff s\cap O_i = O_i\cap t$.
Note that this is always an equivalence, matching S5 Kripke models. Hence the corresponding frameworks are called \EL-S5, \PAL-S5 and \DEL-S5.
The more general belief structures (for \EL-K, \PAL-K and \DEL-K) can encode arbitrary relations using formulas over a double vocabulary. In addition to this, note that in~\Cref{tab:overview}, even though model checking for \PAL and \EL were much easier than \DEL (P for \PAL,\EL whereas PSPACE for \DEL) using Kripke models, when it comes to model checking using symbolic structures, all three of them becomes equally hard (PSPACE). This is because all possible states in a Kripke model are listed distinctly when the model is represented in a general Kripke structure. Therefore, going over all of them does not cost any computational resource beyond the size of the Kripke model itself. But when represented using symbolic structures, only the set of propositions $V$ is part of the input. Hence in order to iterate through indistinguishable possibilities of an agent (for example for checking formulas with $\hat{K}_i$ modalities), all possible (at most $2^{|V|}$) states that come out of $V$ have to be traversed and checked. Since such $\hat{K}_i$ modalities are present in \EL, \PAL as well as \DEL, model checking all three of them becomes equally hard.

\begin{definition}\label{def-bls}
  A \emph{belief structure} is a tuple $\F = (V,\theta,\Omega)$ where
  $V$ and $\theta$ are as in Definition~\ref{def-kns} and
  for each agent $i$ we have an \emph{observation law} $\Omega_i \in \LL_B(V \cup V')$
  where $V'$ denotes a fresh copy of $V$.
\end{definition}

The intuition behind $\Omega_i$ is that it is true at a pair of states iff the states are related.
That is, a Boolean formula $\Omega_i \in \LL_B(V \cup V')$ encodes the relation $R_i \subseteq \mathcal{P}(V) \times \mathcal{P}(V)$ given by $R_i s t :\iff s \cup t' \vDash \Omega_i$.
For example, if $\Omega_{\text{Alice}} = p \land q'$ then from any state where $p$ is true Alice will consider any other state where $q$ is true possible.
In particular we will have a loop at the state $\{p,q\}$ because $ \{p,q\} \cup \{p,q\}' \vDash p \land q'$.
This encoding is also widely used for temporal model checking~\cite[Section 8.3]{ChakiBDDbasedMC2018}.

Analogous to how symbolic structures encode Kripke models, transformers encode action models.
To keep track of old propositional values before factual change, we introduce another set of fresh variables, denoted by the operation ${(\cdot)}^\circ$.
Just as $V'$ is a fresh copy of $V$, $V_-^\circ$ is a fresh copy of $V_-$.

\begin{definition}\label{d:trf}
A \emph{transformer} is a tuple $\X = (V^+, \theta^+, {V_-}, {\theta_-}, {\Omega_i^+})$ where
$V^+$ is such that $V \cap V^+ = \varnothing$,
$\theta^+ \in \LL_\DEL(V \cup V^+)$ is called the \emph{event law},
$V_- \subseteq V$ is called the \emph{modified subset},
$\theta_- : V_- \rightarrow \mathcal{L}_{B}(V \cup V^+)$ is called the \emph{change law}
and $\Omega_i^+ \in \mathcal{L}_B(V \cup V')$ for each $i$.
An \emph{event} is a pair $(\X, x)$ where $x \subseteq V^+$.
To update $(\F, s)$ with $(\mathcal{\chi},x)$, let $\mathcal{F} \times \chi := (V^{new}, \theta^{new}, \Omega^{new}_i)$ where
$V^{new} = V \cup V^+ \cup V ^\circ _-$,
\[\theta^{new}=[V_- \mapsto V^\circ _-](\theta \land \|\theta ^+\|_\F) \land \bigwedge_{q \in V^-}(q \leftrightarrow [V_- \mapsto V^\circ _ -](\theta_-(q)))
\]
where $[p \mapsto \psi] \varphi$ denotes a substitution,
$\Omega^{new}_i=([V_- \mapsto V^{\circ}_-][(V_-)' \mapsto (V^{\circ}_-)']\Omega_i) \land \Omega_i^+$,
and lastly the new actual state is $s^x := (s \setminus V_-) \cup (s \cap V_-)^\circ \cup x \cup \{p \in V_- | s \cup x \vDash \theta_-(p)\}$.
\end{definition}

Since our goal is a complexity study of model-checking algorithms, we need to be precise about the size of all structures we take as inputs.
We define the size of transformers as follows.
\begin{definition}\label{defi:sizeOftrf}
    Given a transformer $\X = (V^+, \theta^+, {V_-}, {\theta_-}, {\Omega_i^+})$, we define its size
    by mutual recursion with \autoref{def:sizeOfFormula} as
    $|\X| :=
    |V^+|
    + |\theta^+|
    + |V_-|
    + \sum_{p \in V_-} |\theta_-(p)|
    + \sum_{i \in I} |\Omega_i^+|
    $.
\end{definition}

Also \autoref{d:trf} above and the following two definitions are mutually recursive.

\begin{definition}\label{d:symsemantics}
  The semantics for $\LL_\PAL$ and $\LL_\DEL$ on knowledge and belief structures are as follows.
  We omit the standard Boolean cases.
\begin{enumerate}
\item
  For knowledge structures:
  $(\F,s) \vDash {K}_i \phi$ iff
  for all states $t$ of $\F$, if $s\cap O_i=t\cap O_i $, then $(\F,t) \vDash \phi$.

  For belief structures:
  $(\F,s) \vDash {K}_i \phi$ iff
  for all states $t$ of $\F$, if $s \cup t' \vDash \Omega_i$, then $(\F,t) \vDash \phi$.
\item $(\F,s) \vDash [!\psi] \phi$ iff $(\F,s) \vDash \psi$ implies $(\F^\psi, s) \vDash \phi$
  where $\F^\psi := (V, \theta \land {\| \psi \|}_\F, O)$.
\item
  $(\F,s) \vDash [\X,x] \phi$ iff
  $(\F,s) \vDash [x \sqsubseteq V^+]\theta^+ \text{ implies } (\F \times \X , s^x) \vDash \phi$.
\end{enumerate}
\end{definition}

\begin{definition}\label{d:boolEquiv}
For any structure $\F$ and any formula $\phi \in \LL_{PAL}(V) \cup \LL_{DEL}(V)$
we define its \emph{local Boolean translation} $\| \phi \|_\F$ as follows.
\[
  {\| \top \|}_\F := \top
  \hspace{6mm}
  {\| \bot \|}_\F := \bot
  \hspace{6mm}
  {\| p \|}_\F := p
  \hspace{6mm}
  {\| \neg \psi \|}_\F := \neg \| \psi \|_\F
  \hspace{6mm}
  {\| \psi_1 \land \psi_2 \|}_\F := {\| \psi_1\|}_\F \land {\| \psi_2 \|}_\F
\]
If $\F$ is a knowledge structure, let
    ${\| K_i \psi \|}_\F := \forall(V \setminus O_i)(\theta \to {\| \psi \|}_\F)$.
If $\F$ is a belief structure, let 
    $\| K_i \psi \|_\F := \forall V' ( \theta' \to ( \Omega_i \to ({\| \psi \|}_\F)' ) )$.
Let
    ${\| [ \psi ] \xi \|}_\F := {\| \psi \|}_\F \to {\| \xi \|}_{\F^\psi}$,
    where $\F^\psi := (V, \theta \land {\| \psi \|}_\F, O)$.
Lastly, let
    ${\| [\X,x]\phi \|}_\F :=
      {\| [x \sqsubseteq V^+] \theta^+ \|}_\F
        \to
          [{V_-^\circ} \mapsto {V_-}]
            [x \sqsubseteq V^+]
              [{V_-} \mapsto {\theta_-(V_-)}]
                {\| \phi \|}_{\F \times \X}$
    where $F \times \X$ is from \autoref{d:trf}.
\end{definition}

The above Boolean translation is equivalent, i.e.\ $\F,s \vDash \phi$ iff $s \vDash \|\phi\|_\F$.
The SMCDEL implementation~\cite{SMCDEL} uses the translation, as it is faster than using~\autoref{d:symsemantics}.
For any Kripke model there exists an equivalent symbolic structure and vice versa, and similarly for action models and transformers~\cite{GattingerThesis2018}.

There are now six different model checking problems:
The three languages $\LL_\EL$, $\LL_\PAL$ and $\LL_\DEL$ can each be interpreted on knowledge structures (S5) and on belief structures (K).
We first consider the easiest case \PAL-S5 and then the most general case \DEL-K.
Just like general Kripke models are more general than those for S5, the $\Omega_i$ can express anything that $O_i$ can do, i.e. every knowledge structure can be seen as a belief structure, replacing $O_i$, with the formula $\Omega_i := \bigwedge_{p \in O_i} (p \leftrightarrow p')$.
This needs roughly twice as much memory, resulting in no significant increase of the model-checking input size.

\subsection{Symbolic Model Checking \PAL-S5}

To define the model checking problem for symbolic structures we need to say how exactly the input is given.
For $V$ and $O_i$ this is obvious, but for $\theta$ we stress that we assume a BDD and not a formula.

\begin{definition}
The \emph{symbolic model checking task for PAL for S5} is the following.
Given a knowledge structure $\mathcal{F} = (V, \theta, O_i)$ where $\theta$ is a Boolean function encoded as a BDD,
an actual state $s$ and a formula $\phi \in \LL_\PAL(V)$ which may contain dynamic modalities, decide whether 
$\mathcal{F},s \vDash \phi$.
The input size is $|V| + |\theta| + |O_i| + |\phi|$ where $|\theta|$ is the node count of the BDD and $|\phi|$ is the length of the formula.
\end{definition}

We will show that this problem is PSPACE complete.
The following theorem shows that the problem is already PSPACE-hard for EL (i.e. without announcements).
This means that the symbolic representation matters, because model checking EL on standard Kripke models is in P~\cite{rakhalpern1995}, but it becomes harder when moving from Kripke models to symbolic structures.

\begin{theorem}\label{thm:SymELhard}
Model checking $\LL_\EL$ on knowledge structures is PSPACE-hard.
\end{theorem}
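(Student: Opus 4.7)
The plan is to reduce from True Quantified Boolean Formulas (TQBF), which is the canonical PSPACE-complete problem. Given a QBF instance $\Phi = Q_1 x_1 \, Q_2 x_2 \cdots Q_n x_n \, \psi(x_1,\dots,x_n)$ where each $Q_k \in \{\forall, \exists\}$ and $\psi$ is a propositional formula, I will build in polynomial time a knowledge structure $\F$, a state $s$, and a formula $\phi \in \LL_\EL$ such that $\Phi$ is true iff $(\F,s) \vDash \phi$.

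The key idea is that, in a knowledge structure, $K_i$ quantifies universally over all states that agree with the current one on the observables $O_i$, and its dual $\hat{K}_i := \neg K_i \neg$ quantifies existentially. So I let $V = \{x_1,\dots,x_n\}$, take the state law $\theta := \top$ (a trivial BDD of constant size, making every $s \subseteq V$ a state), use $n$ agents, and set $O_k := V \setminus \{x_k\}$. Then at any state, $K_k \phi$ ranges over both truth values of $x_k$ with the other variables fixed, and $\hat{K}_k \phi$ existentially chooses a value for $x_k$. The target formula $\phi$ is obtained from $\Phi$ by replacing each $\forall x_k$ with $K_k$ and each $\exists x_k$ with $\hat{K}_k$, while $\psi$ itself is already a formula of $\LL_\EL$. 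Evaluating at the state $s := \varnothing$ (or any fixed state), an induction on the quantifier prefix shows $(\F,\varnothing) \vDash \phi$ iff $\Phi$ is true, because each modality in the prefix strips off one quantifier and resets the value of exactly one propositional variable while leaving the others alone.

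It remains to check the reduction is polynomial. The vocabulary has size $n$; the state law $\theta = \top$ is a single-node BDD; each $O_k$ has size $n-1$, so the observables contribute $O(n^2)$; and $|\phi| = |\psi| + n$. Everything is linear or quadratic in $|\Phi|$, and since $\theta = \top$ it is trivially a BDD rather than merely a formula, meeting the input requirement of the model-checking problem. Combined with the upper bound argument elsewhere in the paper, this yields PSPACE-completeness already at the level of $\LL_\EL$.

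I do not see a genuinely hard step: the only subtlety is conceptual, namely noticing that S5 observables in a symbolic structure already give us both universal and existential Boolean quantification over an exponential state space, which is precisely the expressive power that TQBF exploits. This is also the point the authors emphasise in their discussion after \Cref{def-kns} about why $\EL$, $\PAL$ and $\DEL$ collapse to the same complexity on symbolic structures.
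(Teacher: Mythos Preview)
Your proposal is correct and follows essentially the same approach as the paper: reduce TQBF by taking $V$ to be the quantified variables, $\theta=\top$, and for each quantifier an agent whose observables are the complement, so that $K_k$/$\hat K_k$ become $\forall x_k$/$\exists x_k$. The only cosmetic difference is that the paper groups consecutive like-quantified variables into blocks $P_i$ with $O_i = V\setminus P_i$, whereas you use one agent per variable; both versions give a polynomial reduction for the same reason.
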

\begin{proof}
We reduce the evaluation of a Quantified Boolean Formula (QBF) to model checking $\LL_\EL$ on a knowledge structure as follows. Before going further into QBFs, let us introduce a notation. For a set of propositions $P = \{p_1, p_2,\ldots, p_k\}$, the notation $\forall P\equiv\forall p_1\forall p_2\ldots\forall p_k$. Same is true for $\exists P$.

Take any QBF (wlog.~in prenex form) $\psi = \forall P_1 \exists P_2 \ldots \forall P_{n-1} \exists P_n \phi$ where $\phi$ is a Boolean formula (possibly in CNF) over $P_1 \cup \ldots \cup P_n$. Note that there are no free variables in $\psi$.
Let $V = \bigcup_i P_i$ and let $\theta = \top$.
Let the set of agents be $\{1, \ldots, n\}$ and let $O_i := V \setminus P_i$ for each $i$.
Let $\mathcal{F} = (V, \theta, O_i)$ and let $s = \varnothing$.
Let ${\widehat K}_i \phi := \lnot K_i \lnot \phi$.
We now have the following equivalences:
  \[
  \begin{array}{rcll}
\psi \text{ is QBF-true}
  & \iff & \vDash \forall P_1 \exists P_2 \ldots \forall P_{n-1}\exists P_{n} \phi & \text{QBF truth definition} \\
  & \iff & s \vDash \forall P_1 \exists P_2 \ldots \forall P_{n-1} \exists P_n \phi & \text{no free variables} \\
  & \iff & s \vDash \forall (V \setminus O_1) \exists (V \setminus O_2) \ldots \forall (V \setminus O_{n-1}) \exists (V \setminus O_n) \phi & \text{by definition of $O_i$} \\
  & \iff & \mathcal{F},s \vDash K_1 \hatK_2 \ldots K_{n-1} \hatK_n \phi & \text{by \autoref{d:symsemantics}} \\
  \end{array}
  \]
  The last formula is of the same length as the given QBF.
  Hence we have a polynomial reduction from QBF-truth to model checking knowledge structures.
\end{proof}
In short, it is hard to do model-checking in polynomial time on symbolic structures since we need to go over every possible valuation on $V$ while evaluating $K_i\psi$, unlike in Kripke models where these valuations are enumerated explicitly in the model and thus part of the input size.

Given that $\LL_\EL$ is a fragment of $\LL_\PAL$ and $\LL_\DEL$ we have the following corollary.
\begin{corollary}\label{cor:symknowlPSPACEhard}
Model checking $\LL_\PAL$ and $\LL_\DEL$ on knowledge structures is PSPACE-hard.
\end{corollary}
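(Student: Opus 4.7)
The plan is to obtain the corollary as an essentially immediate consequence of \autoref{thm:SymELhard}, using the fact that $\LL_\EL$ sits inside both $\LL_\PAL$ and $\LL_\DEL$ as a syntactic fragment. First I would verify this inclusion from the grammar in the language definition: when $D = \varnothing$ the clause $[d]\phi$ is simply unavailable, so every $\LL_\EL(V)$-formula occurs verbatim as an $\LL_\PAL(V)$-formula and as an $\LL_\DEL(V)$-formula. The measure in \autoref{def:sizeOfFormula} assigns the same length $|\phi|$ in all three languages, since the clauses that differ ($|[!\psi]\phi|$ and $|[\X,x]\phi|$) are never triggered on a purely epistemic formula.

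Next I would reuse the reduction from \autoref{thm:SymELhard} verbatim. For any QBF $\psi = \forall P_1 \exists P_2 \ldots \forall P_{n-1} \exists P_n \phi$ it produces a knowledge structure $\mathcal{F} = (V,\top,O_i)$, a state $s = \varnothing$, and the target formula $K_1 \hatK_2 \cdots K_{n-1} \hatK_n \phi$, which is built only from Boolean connectives and $K_i$-modalities (with $\hatK_i := \lnot K_i \lnot$). This formula therefore lies in $\LL_\EL(V) \subseteq \LL_\PAL(V) \cap \LL_\DEL(V)$. The semantic clause for $K_i$ in \autoref{d:symsemantics} is identical across the three languages, so the chain of equivalences in the proof of \autoref{thm:SymELhard} goes through unchanged when the target formula is read as an $\LL_\PAL$-formula or an $\LL_\DEL$-formula. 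The reduction remains polynomial time because neither the structure $\mathcal{F}$ nor the formula changes.

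Combining these two steps yields polynomial-time reductions from QBF-truth to model checking $\LL_\PAL$ and to model checking $\LL_\DEL$ on knowledge structures, which gives PSPACE-hardness for both. I expect no obstacle: the only potential subtlety is making sure the size measure and the semantics of $K_i$ really do agree across the three languages, and both are forced by the uniform definitions given earlier in the paper.
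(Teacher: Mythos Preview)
Your proposal is correct and takes essentially the same approach as the paper: the corollary is obtained immediately from \autoref{thm:SymELhard} via the observation that $\LL_\EL$ is a syntactic fragment of both $\LL_\PAL$ and $\LL_\DEL$. Your write-up simply makes explicit the sanity checks (agreement of the size measure and of the $K_i$-semantics across the three languages) that the paper leaves implicit.
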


Having shown hardness, we now turn to membership.
For simplicity we focus on \PAL in this section.
We will define an algorithm and then show its correctness and its memory usage.
Note that to stay in PSPACE we cannot use the common PAL reduction axioms (see \cite[Def 4.53]{DELbook}) as they would blow up the size of formulas by making copies of the announced formulas.
For example, consider $[! K_i p ] K_i p$.
The reduction axiom would give us the following equivalences:
\[
  [! [! K_i p ] K_i p] K_i p
  \equiv ([! K_i p ] K_i p)\to K_i[! [! K_i p ] K_i p]p
  \equiv ((K_i p)\to K_i[!K_i p]p)\to K_i[!((K_i p)\to K_i[!K_i p]p)]p
\]
Here the last formula which would then be model checked is huge in size with respect to the initial $[! [! K_i p ] K_i p] K_i p$.
In general, rewriting $[! [! [! \ldots [! [! K_i p] ] K_i p] K_i p] \ldots ] K_i p$
leads to a formula with $2^n$ copies of $K_i p$ and thus needing exponential space in terms of the original formula length.

Our Algorithm~\ref{alg:checkPAL} $\chk$ takes as inputs a knowledge structure $\F$, a $\PAL$ formula $\phi$ and a list of formulas $L$.
For the model checking problem $L$ is initially empty and populated by the algorithm itself.

\begin{algorithm}[H]
\caption{$\chk$}\label{alg:checkPAL}
\KwIn{knowledge structure $\F = (V, \theta, O)$,
    list of formulas $L = [\ell_0,\ldots,\ell_k]$,
    state $s \subseteq V$ of $\F \times \ell_0 \times \dots \times \ell_k$,
    formula $\phi$}
\Precondition{$s$ is a state of $\F$ (i.e.\ $s \vDash \theta$) and $L$ can be announced on $(\F,s)$.}
\KwOut{$\mathsf{true}$ or $\mathsf{false}$}

\Switch{$\phi$}{
    \lCase{$p$}{
        \Return $p \in s$
    }
    \lCase{$\lnot \phi$}{
        \Return NOT $\chk(\F, L, s, \phi)$
    }
    \lCase{$\phi_1 \land \phi_2$}{
        \Return ($\chk(\F, L, s, \phi_1)$ AND $\chk(\F, L, s, \phi_2)$)
    }
    \Case{$K_i \phi_1$}{
        \ForEach{$t \subseteq V$\label{algoline:PALmcAllValIter}
        }{            
            \If{$t \vDash \theta$ AND $t\cap O_i = s\cap O_i$\label{algoline:checkindisting}}{
                    \texttt{stillExists} := $\mathsf{true}$ \tcp{did $t$ ``survive'' the announcements in $L$?}
                    \ForEach{$j \in [0, \dots, k]$\label{algoline:verifytsurv}
                    }{
                        \lIf{NOT $\chk(\F, [\ell_0,\ldots,\ell_{j-1}], t, \ell_j)$ } {
                            \texttt{stillExists} := $\mathsf{false}$
                        }
                    }
                    \If{\texttt{stillExists} AND NOT $\chk(\F, [\ell_0,\ldots \ell_k], t, \phi_1)$}{
                            \Return $\Fa$
                    }
            }
        }
        \Return $\mathsf{true}$
    }
    \Case{$[!\psi]\phi_1$}{
        \lIf{$\chk(\F, L, s,\psi)$\label{algoline:pubannounceantecedent}}{
            \Return $\chk(\F, L ++ [\psi], s, \phi_1)$
            \textbf{else} \Return $\Tr$\label{algoline:pubannouncevacutrue}
        }
        
    }
}
\end{algorithm}

The algorithm proceeds by recursion on the formula $\phi$.
There are two challenges we need to take care of so that the algorithm takes only polynomial space:
\begin{itemize}
\item Firstly, we cannot compute the BDDs of announcements as in \autoref{d:boolEquiv}, because short formulas may have BDDs of exponential size.
  Hence we track announcements in the list $L$.
\item When evaluating formulas of the form $K_i\phi$, we must not compute the exponentially large set of all states.
  But we can iterate over them instead, without ever storing the full set.
\end{itemize}
To make both solutions compatible with each other, when evaluating $K_i\phi$ we check in Line~\ref{algoline:verifytsurv} whether a state still exists after the sequence of announcements in $L$, using recursive calls.
This technique is similar to the model checking algorithm for succinct DEL in~\cite{ComplexityDELAucherS13}.
It is also similar to the context-dependent semantics in~\cite{wang2013axiomatizations}, where the announcements are stored as a list of formulas and the model update is not triggered until $K_i \phi$ formulas are evaluated.
We now formally state that $\chk$ is correct.

\begin{restatable}{lemma}{PALcheckCorrect}\label{l:PALcheckCorrect}
  Given a pointed knowledge structure $\F$, a state $s$ of $\F$,
  a list of \PAL-S5 formulas $L = [\ell_0,\ldots,\ell_k]$ and a \PAL-S5 formula $\phi$,
  we have $\F^{\ell_0\ldots\ell_k},s\vDash\phi$ iff $\chk(\F,L,s,\phi)$ returns $\Tr$.
\end{restatable}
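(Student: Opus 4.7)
The proof is by strong induction on the measure $m(L, \phi) := |\phi| + \sum_{j=0}^{k} |\ell_j|$ where $L = [\ell_0,\ldots,\ell_k]$. A naive induction on $\phi$ alone does not work: in the $K_i \phi_1$ case the algorithm recursively calls $\chk$ on each announcement $\ell_j \in L$, and $\ell_j$ may be larger than $K_i \phi_1$. The measure $m$ however strictly decreases at every recursive call. In the $K_i \phi_1$ case, $L$ is replaced by the strictly shorter prefix $[\ell_0,\ldots,\ell_{j-1}]$ and $\phi$ by $\ell_j$ with $j \leq k$, so the sum drops by at least $|K_i\phi_1| + \sum_{i>j}|\ell_i| \geq 1$. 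In the $[!\psi]\phi_1$ case, $\phi$ is replaced by $\psi$ or $\phi_1$, both strictly smaller even when $\psi$ is appended to $L$.

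The atomic and Boolean cases of the induction step are immediate from \autoref{d:symsemantics}. For $\phi = [!\psi]\phi_1$, I would unfold the semantics: since $(\F^{\ell_0\ldots\ell_k})^\psi = \F^{\ell_0\ldots\ell_k\psi}$, we have $(\F^{\ell_0\ldots\ell_k}, s) \vDash [!\psi]\phi_1$ iff $(\F^{\ell_0\ldots\ell_k}, s) \vDash \psi$ implies $(\F^{\ell_0\ldots\ell_k\psi}, s) \vDash \phi_1$. Applying IH to line~\ref{algoline:pubannounceantecedent} and to the subsequent recursive call on $\phi_1$ yields the equivalence, with line~\ref{algoline:pubannouncevacutrue} handling the vacuous case where $\psi$ fails.

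The case $\phi = K_i\phi_1$ is the heart of the proof. The key semantic observation, using the Boolean equivalence $\F, s \vDash \phi \iff s \vDash \|\phi\|_\F$ together with the definition $\F^\psi = (V, \theta \land \|\psi\|_\F, O)$, is that the states of $\F^{\ell_0\ldots\ell_k}$ are exactly those $t \subseteq V$ with $t \vDash \theta$ and $(\F^{\ell_0\ldots\ell_{j-1}}, t) \vDash \ell_j$ for every $j \in \{0, \ldots, k\}$. The algorithm enumerates all $t \subseteq V$ one at a time (this is why only polynomial space is used, since no full state set is ever stored), filters via line~\ref{algoline:checkindisting} for $t \vDash \theta$ and $O_i$-equivalence with $s$, and then invokes $\chk$ along prefixes of $L$ to decide survival. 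By IH each such call returns $\Tr$ iff $(\F^{\ell_0\ldots\ell_{j-1}}, t) \vDash \ell_j$, so \texttt{stillExists} correctly witnesses that $t$ is a state of $\F^{\ell_0\ldots\ell_k}$. By IH the final call $\chk(\F, L, t, \phi_1)$ decides $(\F^{\ell_0\ldots\ell_k}, t) \vDash \phi_1$, and the outer loop mirrors the universal quantifier in the semantics of $K_i$ on $\F^{\ell_0\ldots\ell_k}$.

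The main obstacle besides the measure is the precondition of the recursive calls in the $K_i$ case: the algorithm calls $\chk(\F, [\ell_0,\ldots,\ell_{j-1}], t, \ell_j)$ for states $t$ that need not satisfy the earlier announcements, so the stated precondition can fail for these nested calls and the induction hypothesis, read literally, does not apply. I would sidestep this by proving the slightly stronger precondition-free statement $\chk(\F, L, s, \phi) = \Tr \iff s \vDash \|\phi\|_{\F^{\ell_0\ldots\ell_k}}$, which is meaningful for any $s \subseteq V$. The Boolean equivalence noted after \autoref{d:boolEquiv} then recovers the lemma whenever the precondition is actually satisfied.
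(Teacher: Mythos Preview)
Your approach is essentially the paper's: strong induction on the combined size of $\phi$ and the list $L$, with a case split on the outermost connective of $\phi$. The paper phrases this as ``induction on the input size'' without isolating the measure, but the decrease checks you write out match the paper's inequalities (the paper also carries the constant $|\F|$, which is irrelevant since $\F$ never changes across calls).

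Where you go beyond the paper is in your final paragraph. The appendix proof simply asserts that the inner recursive calls in the $K_i$ case are covered by the induction hypothesis, without addressing whether the precondition ``$L$ can be announced on $(\F,t)$'' actually holds for them. Your proposed fix---proving instead the precondition-free equivalence $\chk(\F,L,s,\phi) = \Tr \iff s \vDash \|\phi\|_{\F^{\ell_0\ldots\ell_k}}$, which is well-defined for arbitrary $s\subseteq V$ because $\F^{\ell_0\ldots\ell_k}$ is always defined as a structure---is correct and arguably cleaner. An alternative that stays closer to the paper's statement is an inner induction on $j$: if all calls up to index $j-1$ returned $\Tr$, then by the outer induction hypothesis the prefix $[\ell_0,\ldots,\ell_{j-1}]$ is indeed announceable at $t$, so the precondition of the $j$-th call is met; and once some call returns $\Fa$, the value of \texttt{stillExists} is already fixed and the remaining (possibly ill-preconditioned) calls are irrelevant to correctness. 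Either route closes the gap; your observation is a genuine refinement of the paper's presentation rather than a different proof strategy.
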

\noindent\emph{Proof sketch.}
By induction on the size of inputs $\phi$, $\F$ and $L$. The difficult case is for $K_i\phi_1$. It iterates over all valuations (line~\ref{algoline:PALmcAllValIter}) checking whether they agree on what $i$ observes (line \ref{algoline:checkindisting}) and recursively verifies $\phi_1$ as per \autoref{d:symsemantics}. The survival of the valuation by the sequence of announcements $L$ is checked in line~\ref{algoline:verifytsurv}.
For a detailed proof, see the appendix.

Next comes our main result for this subsection.

\begin{theorem}\label{thm:SymPA}
  Model checking $\LL_\PAL, \LL_\EL$ on knowledge structures is PSPACE-complete.
\end{theorem}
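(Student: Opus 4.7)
The plan is to prove both halves of PSPACE-completeness. Hardness is immediate from \autoref{cor:symknowlPSPACEhard}, so my task reduces to membership, which I would obtain by showing that Algorithm~\ref{alg:checkPAL} runs in polynomial space on the initial call $\chk(\F, [\,], s, \phi_0)$; its correctness is already supplied by \autoref{l:PALcheckCorrect}.

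The core of the space analysis is the potential
\[
  \mu(\phi, L) \;:=\; |\phi| + \sum_{\ell \in L} |\ell|,
\]
a non-negative integer that starts out equal to $|\phi_0|$. I would check by case analysis that $\mu$ strictly decreases on every recursive call $\chk$ makes. The Boolean cases are obvious since $L$ is unchanged and the formula shrinks. The $[!\psi]\phi_1$ case works because $|[!\psi]\phi_1| = |\psi|+|\phi_1|+1$, so extending $L$ by $\psi$ while descending into $\phi_1$ still saves one unit. In the $K_i\phi_1$ case the call on $\phi_1$ with unchanged $L$ obviously drops $\mu$, and each call on an announcement $\ell_j$ with the shortened prefix $[\ell_0,\dots,\ell_{j-1}]$ drops the contribution of $\phi_1$ together with that of $\ell_{j+1},\dots,\ell_k$. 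From strict monotonicity the call stack never exceeds depth $|\phi_0|$.

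At any stack level the only non-trivial local state is the candidate valuation $t \subseteq V$, which is enumerated in place as a $|V|$-bit counter on line~\ref{algoline:PALmcAllValIter} rather than materialising the exponentially many states; beyond $t$ the algorithm keeps only an index $j$, a Boolean flag \texttt{stillExists}, and pointers into the read-only input. Evaluating $t \vDash \theta$ on the BDD $\theta$ is polynomial in $|\theta|$, and $t \cap O_i = s \cap O_i$ is a bit comparison. Polynomial depth times polynomial per-level memory yields a PSPACE procedure, establishing membership and hence, with \autoref{cor:symknowlPSPACEhard}, completeness.

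The main subtlety I anticipate is the $K_i\phi_1$ case: on the surface it looks as if stack frames could compound badly, because inside one such call the algorithm re-evaluates every formula in $L$ at every candidate $t$, and on some branches $L$ itself keeps growing through nested $[!\psi]$ modalities. The potential $\mu$ is designed precisely so that growth of $L$ is paid for by shrinkage of the current formula (via the $+1$ in $|[!\psi]\phi_1|$), which is exactly what keeps these two forms of recursion from interacting badly.
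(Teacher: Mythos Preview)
Your proposal is correct and follows essentially the same approach as the paper: hardness via \autoref{cor:symknowlPSPACEhard}, correctness via \autoref{l:PALcheckCorrect}, and membership by arguing that \autoref{alg:checkPAL} has linearly bounded recursion depth with polynomial local state per frame. Your explicit potential $\mu(\phi,L)=|\phi|+\sum_{\ell\in L}|\ell|$ makes the strict-decrease argument more rigorous than the paper's terse claim that ``every recursive call \dots\ is of size no more than the size of the current input,'' but the underlying idea is the same.
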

\begin{proof}
  Hardness follows from \autoref{cor:symknowlPSPACEhard}.
  By \autoref{l:PALcheckCorrect} it remains to show that \autoref{alg:checkPAL} only takes space polynomial in the size of the input. Membership for $\LL_\EL$ follows as a special case of $\LL_\PAL$.
  At any instance of a call to $\chk(\F,L,s,\phi)$, exactly one switch case matches.
  Every recursive call taken in $\chk(\F,L,s,\phi)$ is of size no more than the size of the current input.
  Moreover, in the call stack, there are at most linear many recursive calls left to evaluate at any given instance.
  In each call, the space used is polynomial.
  The most intensive is the case for $K_i \phi_1$.
  We iterate over every valuation $t \subseteq V$, which needs memory in the size of $|V$|.
  Note that we never store the whole list of states nor results.
\end{proof}

We note that the model checking algorithm performing well \emph{in practice} (and used in SMCDEL~\cite{SMCDEL}) is \emph{not} in PSPACE.
To see this, take a Boolean formula for which the BDD has to be large, for example consider the vocabulary $V = \{p_1, \dots p_{2n}\}$ and the formula
\label{largeBddForm}
$\phi := (p_1 \land p_{n+1} ) \lor \ldots \lor (p_n \land p_{n+n})$
which has length in $\mathcal{O}(n)$.
Suppose then we model check $[!\phi]p$ on a knowledge structure $\F$ with state law $\theta = \top$ using the SMCDEL algorithm.
This means we update the state law (stored as a BDD) from $\top$ to $\theta \land \|\phi\|_\F$.
As discussed in~\cite[p.~681]{Bryant86} the BDD of this formula needs $2^{n+1}$ many nodes.
Hence the resulting knowledge structure needs exponential space and this is not possible in PSPACE.

In summary, there is a trade-off between time and space here: \Cref{alg:checkPAL} will need exponential time to evaluate $K_i$, whereas the BDD-based method may need exponential space for $[!\phi]$.

\subsection{Symbolic Model Checking \DEL-K}

We now generalize the setting from the previous section in two aspects.
First, we switch from S5 to K by using belief structures instead of knowledge structures.
Second, we move from public announcements to general events.
We already discussed that \DEL can be seen as an extension of \PAL in the introduction.
We now illustrate how the move from S5 to K, and the additional feature of factual change are dealt with symbolically.
The difference between S5 and K is captured by $O_i \subseteq V$ and $\Omega_i \in \LL_B(V \cup V^+)$: $O_i$ always encodes an equivalence relation for a hard notion of knowledge, while $\Omega_i$ can encode arbitrary relations to describe belief.
Factual change is captured by $V_-$ and $\theta_-$ in the transformers from~\autoref{d:trf}, describing which propositions are changed and how~\cite[Section 2.8]{GattingerThesis2018}.

\begin{theorem}\label{thm:SymELKhard}
Model checking $\LL_{\EL}$ on belief structures is PSPACE-hard.
\end{theorem}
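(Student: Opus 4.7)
The plan is to reduce model checking $\LL_\EL$ on knowledge structures (which is PSPACE-hard by \autoref{thm:SymELhard}) to model checking $\LL_\EL$ on belief structures. Since $\LL_\EL$ formulas are the same on both kinds of structures, the reduction only needs to translate the structure itself, exactly via the embedding already sketched between \autoref{d:boolEquiv} and the start of \autoref{sec:symbolic}: given a knowledge structure $\F = (V, \theta, O)$, define the belief structure $\F' := (V, \theta, \Omega)$ where for each agent $i$ we set
\[
\Omega_i \;:=\; \bigwedge_{p \in O_i} (p \leftrightarrow p').
\]

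The first step is to check that this is a polynomial-size reduction: $V$ and $\theta$ are copied, and each $\Omega_i$ has size $O(|O_i|)$ (as a formula, or equivalently a BDD linear in $|O_i|$ under any sensible ordering of $V \cup V'$). So $|\F'|$ is at most a constant factor larger than $|\F|$, and the reduction is clearly computable in polynomial time.

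The second step is to show semantic equivalence, i.e.\ $(\F, s) \vDash \phi$ iff $(\F', s) \vDash \phi$ for every $\phi \in \LL_\EL(V)$ and every state $s$. This reduces to showing that the two structures induce the same epistemic relations, which is a direct unfolding of definitions: for any two subsets $s, t \subseteq V$,
\[
s \cup t' \vDash \bigwedge_{p \in O_i}(p \leftrightarrow p')
\;\iff\;
\forall p \in O_i\ (p \in s \Leftrightarrow p \in t)
\;\iff\;
s \cap O_i = t \cap O_i.
\]
The right-hand side is exactly the relation $R_i$ used in \autoref{d:symsemantics}(1) for knowledge structures, while the left-hand side is the relation used for belief structures. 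A straightforward induction on $\phi \in \LL_\EL$ then transfers satisfaction between $\F$ and $\F'$, using that the Boolean and $K_i$-clauses depend only on $V$, $\theta$, and $R_i$.

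Combining the two steps, a PSPACE algorithm for belief-structure $\LL_\EL$ model checking would yield one for knowledge-structure $\LL_\EL$ model checking, contradicting \autoref{thm:SymELhard}. There is no real obstacle here; the only thing to be careful about is that the translation of $O_i$ to $\Omega_i$ is stated as polynomial in the chosen input measure (formula length or BDD node count), which is immediate since $\Omega_i$ is a conjunction of $|O_i|$ biconditionals over disjoint variable pairs.
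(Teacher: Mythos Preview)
Your proof is correct. The key ingredient---encoding each $O_i$ by the formula $\Omega_i = \bigwedge_{p\in O_i}(p\leftrightarrow p')$ of linear size---is exactly what the paper uses. The only difference is organisational: the paper re-runs the QBF reduction from \autoref{thm:SymELhard} directly on belief structures (replacing the $O_i$ there by the corresponding $\Omega_i$), whereas you treat \autoref{thm:SymELhard} as a black box and compose it with the polynomial embedding of knowledge structures into belief structures. Your route is arguably cleaner, since it makes explicit that the embedding is a polynomial many-one reduction and avoids repeating the QBF argument; the paper's route is slightly more self-contained. Either way the content is the same.
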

\begin{proof}
Similar to the proof of \autoref{thm:SymELhard}, but instead of $O_i$ let $\Omega_i$ be the BDD of $\bigwedge_{p \in P_i} (p \leftrightarrow p')$.
This BDD has size linear in $|V|$.
Hence we reduce QBF truth to model checking $\LL_\EL$ on belief structures.
\end{proof}

Again, given that $\LL_\EL$ is a fragment of $\LL_\PAL$ and $\LL_\DEL$ we have the following corollary.
\begin{corollary}\label{cor:symKPSPACEhard}
Model checking $\LL_\PAL$ and $\LL_\DEL$ on belief structures is PSPACE-hard.
\end{corollary}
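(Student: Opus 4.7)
The plan is to observe that this corollary is an immediate consequence of \autoref{thm:SymELKhard}. By the grammar in the excerpt, $\LL_\EL$ arises from $\LL_D$ by taking $D = \varnothing$, so every $\LL_\EL$ formula is syntactically also an $\LL_\PAL$ formula and an $\LL_\DEL$ formula. Moreover, the clauses of \autoref{d:symsemantics} for Boolean connectives and for $K_i$ coincide across the three logics, so the truth value of an $\LL_\EL$ formula on a pointed belief structure does not depend on whether we parse it inside $\LL_\EL$, $\LL_\PAL$, or $\LL_\DEL$. The size measure from \autoref{def:sizeOfFormula} likewise agrees on such formulas, since no $[!\psi]$ or $[\X,x]$ subformulas occur.

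With these observations in place, I would simply re-use the reduction from the proof of \autoref{thm:SymELKhard} verbatim: given a QBF $\psi = \forall P_1 \exists P_2 \ldots \forall P_{n-1} \exists P_n \phi$, construct the belief structure $\F = (V, \top, \Omega_i)$ with $\Omega_i$ the BDD of $\bigwedge_{p \in P_i}(p \leftrightarrow p')$, take $s = \varnothing$, and output the formula $K_1 \hatK_2 \ldots K_{n-1} \hatK_n \phi$. Viewed now as an input to the $\LL_\PAL$ (respectively $\LL_\DEL$) model-checking problem on belief structures, this is still a polynomial-time many-one reduction from QBF-truth, which is $\PSPACE$-hard. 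Hence both model-checking problems are $\PSPACE$-hard.

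There is essentially no obstacle: the only thing to verify is that the witness formula produced by \autoref{thm:SymELKhard} lies in the common $\LL_\EL$ fragment, which by inspection it does, so no manipulation of dynamic modalities is required. The statement therefore follows directly as a corollary in the same way that \autoref{cor:symknowlPSPACEhard} followed from \autoref{thm:SymELhard} for knowledge structures.
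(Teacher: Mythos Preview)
Your proposal is correct and takes essentially the same approach as the paper: the paper simply notes that $\LL_\EL$ is a fragment of $\LL_\PAL$ and $\LL_\DEL$ and states the corollary without further argument, while you spell out a bit more carefully why the semantics and size measure agree on the shared fragment before re-invoking the reduction from \autoref{thm:SymELKhard}. This is exactly how the paper derives \autoref{cor:symknowlPSPACEhard} from \autoref{thm:SymELhard}, as you observe.
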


\begin{algorithm}[H]
\caption{$\chkfcdelK$}\label{alg:checkfcDELK}
\Input{Knowledge structure $\F=(V,\theta, \Omega)$,
    list of events $L = [(\X_0, x_0), \ldots, (\X_k, x_k)]$,
    state $s \subseteq V$,
    formula $\phi \in \LL_{DEL}(V)$}
\Precondition{$s$ is a state of $\F$ (i.e.\ $s \vDash \theta$) and $L$ can be executed on $(\F,s)$.}
\KwOut{$\mathsf{true}$ or $\mathsf{false}$}

\Switch{$\phi$}{
    \Case{$p$}{
        \If{$L = []$}{
            \Return $p \in s$
        }
        \tcp{Compute the final state}
        \ForEach{$j \in [0, \dots, k]$}{
            $s := (s \setminus V_{j,-}) \cup \{ q \in V_{j,-} \mid \chkfcdelK(\F, [(\X, x_0), \ldots, (\X,x_j)], s, [x_j^+ \sqsubseteq V_j ^+] \theta_j^-(q)) \}$
        }
        \Return $p \in s$
    }
    \Case{$\lnot \phi$}{
        \Return NOT $\chkfcdelK(\F, L, s, \phi)$
    }
    \Case{$\phi_1 \land \phi_2$}{
        \Return ($\chkfcdelK(\F, L, s, \phi_1)$ AND $\chkfcdelK(\F, L, s, \phi_2)$)
    }
    \Case{$K_i \psi$}{
        \ForEach{$t \subseteq V$, $t^+_0 \subseteq V_0^+, \dots, t^+_k \subseteq V_k^+$}{
            \If{$t \vDash \theta$}{
                \If{$s \cup t^\prime \vDash \Omega_i $ AND
                $x_1^+ \cup t^{+\prime}_1 \vDash \Omega_{0,i}^+$ AND
                \dots, $x_k^+ \cup t_k^{+\prime} \vDash \Omega_{k,i}^+$
                \label{algoline:DELcheckindisting}}{
                    \texttt{accessible} := $\mathsf{true}$ \tcp{did $t$ ``survive'' the actions in $L$?}
                    \ForEach{$j \in [0, \dots, k]$}{
                        \If{not $\chkfcdelK(\F, [(\mathcal{X}_0, t_0^+), \ldots, (\mathcal{X}_{j-1},t_{j-1}^+)], t, [t_j^+ \sqsubseteq V_j ^+] \theta_j ^+)$}{
                            \texttt{accessible} := $\mathsf{false}$
                        }
                    }
                    \If{\texttt{accessible}}{
                        \If{not $\chkfcdelK(\F, [(\mathcal{X}_0, t^+_0), \ldots, (\mathcal{X}_k,t^+_k)], t, \psi)$}{
                            \Return $\Fa$
                        }
                    }
                }
            }
        }
        \Return $\mathsf{true}$
    }
    \Case{$[\mathcal{X}, x] \psi$
      }{
        \If{$\chkfcdelK(\F, L, s, [x \sqsubseteq V ^+] \theta ^+)$ \tcp{checking the precondition}\label{algoline:DELeventprecondition}}{
            \Return ($\chkfcdelK(\F, L ++ [( \mathcal{X}, x )], s, \psi)$)
        }
        \Return $\Tr$\label{algoline:DELvacutrue}
    }
}
\end{algorithm}

Similar to the previous section about \PAL-S5, we prove membership by defining an algorithm.
Also here the challenge is that we cannot compute the potentially too large BDD of the new state law after updating.
Fortunately, also a similar solution works: the list $L$ will now not just contain a list of public announcements, but a list of transformers.
Then to check $\F,s \vDash [\mathcal{X},x]\phi$ we add that transformer to $L$.
Later, when we check $K_i \phi$, we iterate ``on the fly'' over the states $t$ that may be the result of the sequence of transformers, including the additional atoms introduced and taking into account the factual change. 

\begin{restatable}{lemma}{DELcheckCorrect}\label{l:DELcheckCorrect}
  Given a pointed knowledge structure $\F = \ldiaarg{V,\theta,O_i},s\subseteq V$ and a $\DEL$ formula $\phi$, we have
  $\F,s \vDash \phi$ iff $\chkfcdelK(\F,[],s,\phi)$ returns $\Tr$.
\end{restatable}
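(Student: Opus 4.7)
The plan is to prove the lemma by well-founded induction, following the same strategy as the proof of \autoref{l:PALcheckCorrect} but handling two additional complications: (i) the switch from S5-observables $O_i$ to the $\Omega_i$ formulas of belief structures, and (ii) factual change, which makes the actual state depend on the history $L$. Since events in $\DEL$ contain formulas (in their event law, change laws, and observation laws) and formulas contain events, I would use the well-founded mutual recursion between $\LL_\DEL$ and events mentioned after \autoref{d:trf}, ordering calls by the pair $(|\phi| + \sum_{j=0}^{k} |\X_j|,\ |L|)$. Every recursive call in \autoref{alg:checkfcDELK} strictly decreases this measure: Boolean and modal cases shrink $\phi$; calls of the form $\chkfcdelK(\F,L',t,[t_j^+ \sqsubseteq V_j^+]\theta_j^+)$ and $\chkfcdelK(\F,L',s,[x_j^+ \sqsubseteq V_j^+]\theta_j^-(q))$ invoke the algorithm on strict subformulas of events already in $L$; and the $[\X,x]\psi$ case appends to $L$ but strictly shrinks $\phi$.

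The routine cases reduce directly to the semantics in \autoref{d:symsemantics} together with the definition of $\F \times \X$ in \autoref{d:trf}: the $\lnot$ and $\land$ cases are immediate, and for $[\X,x]\psi$ the precondition check at line~\ref{algoline:DELeventprecondition} matches $(\F,s) \vDash [x \sqsubseteq V^+]\theta^+$ by the inductive hypothesis, with vacuous truth handled at line~\ref{algoline:DELvacutrue}. The main work is in the case $K_i \psi$. Here one must show that iterating over $t \subseteq V$ and $t_0^+ \subseteq V_0^+, \dots, t_k^+ \subseteq V_k^+$, and filtering by $t \vDash \theta$, by the $\Omega$-conditions at line~\ref{algoline:DELcheckindisting}, and by the \texttt{accessible} flag, is precisely a traversal of the states of $\F \times \X_0 \times \cdots \times \X_k$ that are $R_i$-accessible from the current actual point. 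This requires unfolding $\Omega_i^{new}$ as given in \autoref{d:trf} and observing that it is the conjunction, modulo renaming via $V_-^\circ$, of $\Omega_i$ and the $\Omega_{j,i}^+$; together with the precondition checks, this exactly matches the on-the-fly test. Correctness of the recursive call on $\psi$ then follows from the inductive hypothesis applied at the iterated point $(t, t_0^+, \dots, t_k^+)$.

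The main obstacle is the atomic case $p$ in the presence of a nonempty $L$ with factual change. The algorithm overwrites $s$ in place, looping over $j$ and replacing $V_{j,-}$-values using recursive calls to evaluate $[x_j^+ \sqsubseteq V_j^+]\theta_j^-(q)$ under the prefix $[(\X_0,x_0),\ldots,(\X_j,x_j)]$. To justify this, I would establish an auxiliary claim by induction on $k$: the state produced by the loop equals the actual state $s^{x_0\cdots x_k}$ obtained by iterating the definition of $s^x$ from \autoref{d:trf}, i.e.\ the unique set of propositions in $V$ that are true at the pointed updated structure. The subtlety is that each $\theta_j^-(q)$ must be evaluated at the point reached \emph{after} executing $(\X_0,x_0),\ldots,(\X_{j-1},x_{j-1})$ but \emph{before} $(\X_j,x_j)$, which is precisely why the algorithm passes the prefix $[(\X_0,x_0),\ldots,(\X_j,x_j)]$ rather than the whole $L$ and why the inductive measure decreases (the inner formula is a subformula of the $j$-th event). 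Once this bookkeeping lemma is in place, the overall induction goes through; the full argument, including a careful verification of the \autoref{d:trf} unfolding for the $K_i$ case, will be given in the appendix.
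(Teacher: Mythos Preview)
Your overall plan coincides with the paper's proof: generalize to arbitrary histories $L$, then induct, with the $K_i$ case handled by on-the-fly enumeration of successors in the product structure and the $[\X,x]$ case by appending to $L$; your identification of the atomic case under factual change as the crux is exactly right, and your explicit well-founded measure is more careful than the paper's terse ``induction on $\phi$''.

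There is, however, a concrete gap in your treatment of the atomic case. You write that $\theta_{j,-}(q)$ must be evaluated at the point reached \emph{before} $(\X_j,x_j)$, yet you also note (matching the algorithm) that the recursive call carries the prefix $[(\X_0,x_0),\ldots,(\X_j,x_j)]$; under your own generalized invariant, that prefix means evaluation \emph{after} $(\X_j,x_j)$, so the two claims are inconsistent. Relatedly, your measure fails to decrease on this call when $j = k$: the list is unchanged, so $\sum_m |\X_m|$ and $|L|$ are both unchanged, while $|[x_j^+ \sqsubseteq V_j^+]\theta_{j,-}(q)| \ge 1 = |p|$, so neither component of the pair drops. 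Concretely, if $\theta_{k,-}(q)$ happens to be an atom of $V$, the recursive call has the same shape as the outer one and your termination argument stalls. The paper's own proof is loose at this point as well (it even writes the empty list in place of the prefix), so the difficulty is inherited rather than introduced by you; to close the gap you should either argue that the intended prefix is $[(\X_0,x_0),\ldots,(\X_{j-1},x_{j-1})]$, in which case both the semantics and your measure work out cleanly, or devise a different invariant that tracks the in-place rewriting of $s$ across the loop iterations.
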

\noindent\emph{Proof sketch.}
Similar to \autoref{l:PALcheckCorrect}, see appendix for the full proof.

\begin{theorem}\label{thm:chkDELKpolySpace}
    $\chkfcdelK(\F,[],s,\phi)$ takes at most polynomial space with respect to size of input.
\end{theorem}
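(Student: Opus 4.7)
The plan is to adapt the argument from \autoref{thm:SymPA} to the more expressive setting of belief structures with transformers. Since \autoref{l:DELcheckCorrect} already establishes correctness of \autoref{alg:checkfcDELK}, only the space bound remains. I will bound per-frame storage and call-stack depth separately, concluding that the algorithm runs in $\PSPACE$.

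For per-frame storage, a call $\chkfcdelK(\F,L,s,\phi)$ holds the event list $L = [(\X_0,x_0),\ldots,(\X_k,x_k)]$, the state $s \subseteq V$, and in the $K_i$ branch an iteration tuple $(t, t_0^+,\ldots,t_k^+) \in \cP(V) \times \cP(V_0^+) \times \cdots \times \cP(V_k^+)$. Every $\X_j$ appended to $L$ originates from a dynamic modality occurring in the input formula, so $\sum_j |\X_j| \le |\phi|$ and in particular $\sum_j |V_j^+| \le |\phi|$. Hence each subset is encoded in $O(|V|+|\phi|)$ bits, and the exponentially many tuples in the $K_i$ branch are enumerated one at a time rather than stored. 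Each frame thus uses polynomial space in the input.

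For the call-stack depth, I classify every recursive call from a frame $(\phi,L)$ into two types. Type~(i) calls strictly decrease $|\phi|$: the Boolean cases, the inner call in $K_i\psi$, the precondition check $[x\sqsubseteq V^+]\theta^+$ of $[\X,x]\psi$ (since $|[x\sqsubseteq V^+]\theta^+|\le|\X|<|[\X,x]\psi|$), and the main $[\X,x]\psi$ call on $\psi$. Type~(ii) calls strictly shorten $L$ by passing a proper prefix: the survival checks in the $K_i$ case and the change-law evaluations in the $p$ case; the new $\phi$ in each such call is a substituted precondition or change law drawn from a transformer already in $L$, hence bounded in size by the input. A lexicographic counting on the pair $(|L|,|\phi|)$ then yields polynomial stack depth: $|L|$ grows only at type-(i) $[\X,x]$ steps, each of which consumes a dynamic modality from the current formula, so $|L|$ stays bounded by the input size; and between consecutive $L$-changing calls only type-(i) calls occur, each strictly reducing $|\phi|$, so at most $|\phi|$ of them fit.

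The main obstacle — and the only real departure from the \PAL-S5 argument — is the interplay between $L$ and $\phi$: the $[\X,x]\psi$ case grows $L$ while shrinking $\phi$, and the $p$-case cascades through change laws that momentarily introduce formulas not obtainable as subformulas of the current $\phi$. The observations that close the argument are that the change laws from \autoref{d:trf} are purely Boolean, so no fresh dynamic modalities appear along a $p$-case cascade, and that every formula encountered has size at most that of the original input; together these keep the lexicographic potential on $(|L|,|\phi|)$ polynomially bounded and the stack depth polynomial, giving the desired $\PSPACE$ bound.
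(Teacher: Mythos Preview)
Your proposal is correct and follows essentially the same approach as the paper, which simply states that an argument analogous to \autoref{thm:SymPA} applies. You supply considerably more detail than the paper's one-line proof---the type classification of recursive calls, the stack-depth analysis, and the observation that change laws are Boolean---but the underlying strategy (bounding per-frame storage and recursion depth separately, exactly as in the \PAL-S5 case) is the same.
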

\begin{proof}
    An argument analogous to that for \autoref{thm:SymPA} shows that \autoref{alg:checkfcDELK} runs using space at most polynomial in size of the input.
\end{proof}

Together we now get our main result for \DEL-K and as a result \PAL-K and \EL-K as well.

\begin{theorem}\label{thm:SymDEL}
  Model checking $\LL_\EL, \LL_\PAL$ and $\LL_\DEL$ on belief structures is PSPACE-complete.
\end{theorem}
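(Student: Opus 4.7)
The plan is to combine the two directions already assembled in this subsection: hardness from \autoref{cor:symKPSPACEhard} and membership from \autoref{l:DELcheckCorrect} together with \autoref{thm:chkDELKpolySpace}. Since $\LL_\EL \subseteq \LL_\PAL \subseteq \LL_\DEL$ as syntactic fragments (with the sizes agreeing on the common sublanguage per \autoref{def:sizeOfFormula}), both endpoints should transfer uniformly to all three logics, leaving essentially a bookkeeping argument rather than any new combinatorial content.

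Concretely, I would first dispatch hardness by citing \autoref{cor:symKPSPACEhard}, which already yields PSPACE-hardness for $\LL_\EL$, $\LL_\PAL$ and $\LL_\DEL$ on belief structures via the QBF reduction from \autoref{thm:SymELKhard}. For membership, I would invoke \autoref{l:DELcheckCorrect}: given any input $(\F,s,\phi)$ with $\phi \in \LL_\DEL$, running $\chkfcdelK(\F,[],s,\phi)$ correctly decides $\F,s \vDash \phi$, and by \autoref{thm:chkDELKpolySpace} it does so in space polynomial in $|V|+|\theta|+\sum_i|\Omega_i|+|\phi|$. This gives $\LL_\DEL$-membership in PSPACE on belief structures directly. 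The membership for $\LL_\PAL$ and $\LL_\EL$ then follows because any $\LL_\EL$ or $\LL_\PAL$ formula can be viewed as an $\LL_\DEL$ formula at no more than linear blow-up: Boolean and $K_i$ cases are identical, and a public announcement $[!\psi]\phi$ is expressible as $[\X_\psi, x]\phi$ where $\X_\psi$ is the transformer with $V^+ = \varnothing$, event law $\theta^+ = \psi$, no factual change ($V_- = \varnothing$), and $\Omega_i^+ := \bigwedge_{p \in V}(p\leftrightarrow p')$, which has size linear in $|\psi|+|V|$.

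The only subtlety worth stating carefully is this translation from $\LL_\PAL$ (and $\LL_\EL$, trivially) into $\LL_\DEL$ on belief structures: one should verify that the resulting transformer encodes precisely a truthful public announcement, so that \autoref{d:symsemantics}(2) and (3) agree, and that the size bound in \autoref{defi:sizeOftrf} is indeed linear in $|\psi|+|V|$. This is the step I expect to be most pedantic, but it is not a real obstacle — it is the same correspondence used throughout~\cite{vBEGS2018:SymDEL,GattingerThesis2018}. Once noted, the PSPACE upper bound for $\LL_\DEL$ from the algorithm transports to $\LL_\PAL$ and $\LL_\EL$, and combined with hardness yields PSPACE-completeness for all three.
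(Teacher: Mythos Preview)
Your approach is essentially identical to the paper's: hardness from \autoref{cor:symKPSPACEhard}, membership via \autoref{l:DELcheckCorrect} and \autoref{thm:chkDELKpolySpace}, with the $\LL_\EL$ and $\LL_\PAL$ cases following as fragments of $\LL_\DEL$. The paper simply leaves the last step implicit, whereas you spell out the announcement-to-transformer translation.

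That said, the translation you give is not quite right. With $V^+ = \varnothing$ and $V_- = \varnothing$, \autoref{d:trf} yields $\Omega_i^{new} = \Omega_i \land \Omega_i^+$. Your choice $\Omega_i^+ := \bigwedge_{p \in V}(p \leftrightarrow p')$ would therefore intersect each agent's accessibility relation with the identity, collapsing all uncertainty --- this is not a public announcement. The correct choice is $\Omega_i^+ := \top$: since $V^+ = \varnothing$ there are no event variables to observe, and the observation law must be left unchanged so that only the state law is restricted (matching $\F^\psi$ in \autoref{d:symsemantics}). With this one-symbol fix the argument goes through exactly as you describe, and the size bound becomes $|\X_\psi| \in \mathcal{O}(|\psi| + |A|)$, still linear in the input.
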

\begin{proof}
  Using \autoref{alg:checkfcDELK}, which is correct by \autoref{l:DELcheckCorrect} it is correct and uses polynomial space by \autoref{thm:chkDELKpolySpace}, we have membership. By \autoref{cor:symKPSPACEhard}, we have hardness.
\end{proof}
 
 Moreover, since \DEL-S5 is a special case of \DEL-K we have the following.
 
 \begin{theorem}\label{thm:SymDELS5}
     Model checking $\LL_{\text{\DEL}}$ on knowledge structures is PSPACE-complete.
 \end{theorem}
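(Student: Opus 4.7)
The plan is to obtain this as a direct consequence of the results already established in the excerpt, using the fact that the S5 setting embeds into the K setting essentially for free.

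For membership in PSPACE, I would exploit the embedding of knowledge structures into belief structures mentioned just before the subsection on \PAL-S5: given a knowledge structure $\F = (V,\theta,O)$, define the belief structure $\F' = (V,\theta,\Omega)$ with $\Omega_i := \bigwedge_{p\in O_i}(p \leftrightarrow p')$. A routine check shows that the induced relation $R_i$ on states is the same equivalence relation in both cases, so $(\F,s) \vDash \phi$ iff $(\F',s) \vDash \phi$ for every $\phi \in \LL_\DEL(V)$ (and this equivalence is preserved under transformer updates, since transformers act on $V$ and $\Omega_i$ uniformly). The size of $\F'$ is at most linear in the size of $\F$: each BDD for $\bigwedge_{p\in O_i}(p \leftrightarrow p')$ has $O(|O_i|)$ nodes in the standard variable ordering $p_1, p_1', p_2, p_2', \ldots$, so the total blowup is polynomial. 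Hence the symbolic \DEL-K model checking algorithm $\chkfcdelK$ from \autoref{alg:checkfcDELK}, which runs in polynomial space by \autoref{thm:chkDELKpolySpace}, can be applied to $\F'$ to decide $(\F,s) \vDash \phi$ in polynomial space.

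For hardness, I would simply invoke \autoref{thm:SymELhard}: model checking $\LL_\EL$ on knowledge structures is already PSPACE-hard, and since $\LL_\EL \subseteq \LL_\DEL$ (with identical semantics on the $\EL$ fragment), the same reduction witnesses PSPACE-hardness of model checking $\LL_\DEL$ on knowledge structures.

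No step is genuinely hard here; the only thing worth being careful about is verifying that the encoding $\Omega_i := \bigwedge_{p\in O_i}(p \leftrightarrow p')$ is preserved through transformer updates, i.e.\ that $\F \times \X$ and $\F' \times \X$ remain equivalent as pointed structures for every event $(\X,x)$. This follows because the transformer definition in \autoref{d:trf} uses $\Omega_i$ (resp.\ $O_i$) only to produce the updated $\Omega_i^{\text{new}}$, and the substitutions applied to the encoded equivalence relation again yield (a BDD for) the analogous equivalence relation on the extended vocabulary, so the equivalence $(\F,s) \vDash \phi \iff (\F',s) \vDash \phi$ propagates through the recursive clauses of \autoref{d:symsemantics}.
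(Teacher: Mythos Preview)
Your proposal is correct and follows essentially the same approach as the paper: embed knowledge structures into belief structures via $\Omega_i := \bigwedge_{p\in O_i}(p\leftrightarrow p')$ (and likewise $\Omega_i^+ := \bigwedge_{p^+\in O_i^+}(p^+\leftrightarrow p^{+\prime})$ for any S5 transformer), then invoke the PSPACE algorithm for \DEL-K, with hardness inherited from \autoref{thm:SymELhard}. The paper notes the resulting belief structure is at most twice the size, matching your polynomial blowup observation.
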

 \noindent\emph{Proof sketch.}
 Just like in \autoref{thm:SymELKhard}, for membership use the equivalent $\Omega_i = \bigwedge_{p\in O_i}p\leftrightarrow p^\prime$ (for the belief structure) and $\Omega^+_i = \bigwedge_{p^+\in O^+_i}p^+\leftrightarrow p^{+\prime}$ (for any transformer).
 The resulting belief structure is at most twice the size of the given knowledge structures.
 The hardness result is from \autoref{thm:SymELhard}.

\section{Succinct Models with Mental Programs}\label{sec:succinct}

An alternative to the symbolic structures used in the previous section is the framework of \emph{Succinct} \DEL as presented by~\cite{CharSucc2017}.
It is based on a version of (Propositional Dynamic Logic) PDL, also called Dynamic Logic of Propositional Assignment (DLPA)~\cite{BalbianiDynamic2013}.

\begin{definition}
The language of \emph{mental programs} over a vocabulary $V$ is defined by
\[
  \pi ::= p\leftarrow\top \mid p\leftarrow\bot \mid \beta? \ \mid \pi\union\pi \mid \pi;\pi \mid \pi\cap\pi 
\]
where $p\in V$ and $\beta$ is a Boolean formula over $V$.
Let $\Pi_V$ denote the set of all mental programs over $V$.
The length of a mental program is defined as follows.
\[
\begin{array}{lll}
 |p \leftarrow \top| & := & 1 \\
 |p \leftarrow \bot| & := & 1 \\
\end{array}
\hspace{1cm}
\begin{array}{lll}
 |\beta?|            & := & |\beta| \\
 |\pi_1 \cup \pi_2|  & := & |\pi_1| + |\pi_2| \\
\end{array}
\hspace{1cm}
\begin{array}{lll}
 |\pi_1 ; \pi_2|     & := & |\pi_1| + |\pi_2| \\
 |\pi_1 \cap \pi_2|  & := & |\pi_1| + |\pi_2| \\
\end{array}
\]
\end{definition}

In the rest of the article we are mostly interested in how a single relation over the set of states $\powerset{V}$ can be encoded.
Hence we omit further details about succinct DEL and refer to \cite{CharSucc2017} for how Kripke models can be encoded using one mental program for each agent, and how actions and events can be encoded using mental programs as well.

\begin{definition}\label{def:mpSemantics}
  Two states $s, t \subseteq V$ are related by a mental program $\pi$ (written as $s\xrightarrow{\pi}t$) as follows:
  \[
    \begin{array}{ll}
      s\xrightarrow{p\leftarrow\top} t & :\iff t = s\union\{p\}\\
      s\xrightarrow{p\leftarrow\bot} t & :\iff t = s\setminus\{p\}\\
      s\xrightarrow{\beta ?} t & :\iff s=t\mbox{ and }s\vDash\beta\\
    \end{array}
    \hspace{1cm}
    \begin{array}{lll}
      s\xrightarrow{\pi_1\union\pi_2}t & :\iff s\xrightarrow{\pi_1}t\mbox{ or }s\xrightarrow{\pi_2} t\\
      s\xrightarrow{\pi_1;\pi_2} t & :\iff \exists u \subseteq V :~ s\xrightarrow{\pi_1}u\mbox{ and }u\xrightarrow{\pi_2}t\\
      s\xrightarrow{\pi_1\cap\pi_2} t & :\iff s\xrightarrow{\pi_1}t\mbox{ and } s\xrightarrow{\pi_2} t
    \end{array}
  \]
  We define $R_\pi := \{ (s,t) \in V \times V \mid s\xrightarrow{\pi}t \}$.
\end{definition}

\begin{example}
From the state $\{p,q\}$ we can reach the state $\{p\}$ using the mental program $q \leftarrow \bot$.
From the state $\{p,q\}$ we can reach the states $\{p\}$ and $\{q\}$ using the mental program $(p \leftarrow \bot) \cup (q \leftarrow \bot)$.
\end{example}

Other versions of mental programs~\cite{CharSucc2017} also include the general assignment $p \leftarrow \beta$ and inverse $\pi^{-1}$.
However, these operators do not add expressivity, as shown in~\cite{GatSymSucShifts2020}.
Also the following shows that mental programs as defined above are complete in the sense that they can encode all relations.

\begin{definition}\label{d:ofgoto}
    For any $x \subseteq y \subseteq V$, let
    $\outof(x,y) := \bigwedge_{p \in x} p \land \bigwedge_{p \in y \setminus x} \lnot p$.
    For any set $x \subseteq V$ we define
    $\change(x) := \bigsem_{p \in x}((p \leftarrow \top) \cup (p \leftarrow \bot))$,
    and for any state $s \subseteq V$ we define
    $\goto(s,V) := (\bigsem_{p \in s}(p \leftarrow \top)) \ \ ; (\bigsem_{p \in V \setminus x}(p \leftarrow \bot))$.
\end{definition}

We illustrate \Cref{d:ofgoto} with some examples.
First, $x=\{p\}$ and $y=\{p,q,r\}$ then we have $\outof(x,y) = p \land \lnot q \land \lnot r$.
Second, if $x = \{q,r\}$ then $\change(x) = (q \leftarrow \top \cup q \leftarrow \bot); (r \leftarrow \top \cup r \leftarrow \bot)$.
Third, if $V = \{p,q,r\}$ then $\goto(\{p\},V) = p \leftarrow \top; q \leftarrow \bot; r \leftarrow \bot$.

\begin{lemma}\label{lemma:mpCanExpressAll}
  For any relation $R \subseteq V \times V$ there is a mental program $\pi$ such that $R_\pi = R$.
\end{lemma}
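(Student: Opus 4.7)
The plan is to build $\pi$ as a finite union of small programs, one for each pair in $R$, where each small program relates exactly that single pair. The tools from \Cref{d:ofgoto} are tailored for this: $\outof(s,V)?$ acts as a filter that only lets the state $s$ through, and $\goto(t,V)$ sets every proposition to its value in $t$ regardless of the current state.

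Concretely, first I would verify two one-line claims from \Cref{def:mpSemantics}. For any $u,v \subseteq V$, we have $u \xrightarrow{\outof(s,V)?} v$ iff $u = v = s$, since $\outof(s,V)$ is the Boolean formula whose unique satisfying assignment is $s$. And for any $u \subseteq V$, we have $u \xrightarrow{\goto(t,V)} v$ iff $v = t$: the sequential composition first sets every $p \in t$ to $\top$, then sets every $p \in V \setminus t$ to $\bot$, erasing the starting state. Combining these through the semantics of $;$ gives $u \xrightarrow{\outof(s,V)?;\ \goto(t,V)} v$ iff $u = s$ and $v = t$, so this program encodes the singleton relation $\{(s,t)\}$.

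Next I would define
\[
  \pi := \bigcup_{(s,t) \in R} \bigl( \outof(s,V)?\ ;\ \goto(t,V) \bigr),
\]
interpreting the big union as an iterated binary $\cup$ in any order (well-defined since $R \subseteq \powerset{V} \times \powerset{V}$ is finite). By the semantics of $\cup$, $R_\pi$ is the union of the singleton relations $\{(s,t)\}$ for $(s,t) \in R$, which is exactly $R$. For the boundary case $R = \varnothing$, I would take $\pi := \bot?$, whose relation is empty.

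I do not expect a serious obstacle: the construction is a standard ``disjunctive normal form'' for relations, and the two auxiliary lemmas about $\outof$ and $\goto$ are immediate from the semantics. The only mild issue worth flagging is the apparent typo in \Cref{d:ofgoto} (the $x$ in $V \setminus x$ in the definition of $\goto(s,V)$ should be $s$); I would either note this in passing or simply use the intended reading, under which $\goto(t,V)$ overwrites the entire vocabulary to match $t$.
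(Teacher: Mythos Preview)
Your proposal is correct and takes essentially the same approach as the paper: the paper's proof is the one-liner ``Let $\pi := \bigcup_{(x,y) \in R} ( \outof(x, V)? ; \goto(y) )$ and apply \Cref{def:mpSemantics}'', which is exactly your construction. Your version is simply more careful, spelling out the semantics of $\outof(s,V)?$ and $\goto(t,V)$, handling the edge case $R=\varnothing$ explicitly, and noting the $x$/$s$ typo in \Cref{d:ofgoto}.
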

\begin{proof}
    Let $\pi := \bigcup_{(x,y) \in R} ( \outof(x, V)? ; \goto(y) )$ and apply \autoref{def:mpSemantics}.
\end{proof}

In~\cite{CharSucc2017} it is shown how Kripke models can be encoded with mental programs and that model checking \DEL on such \emph{succinct} models is in PSPACE.
This is shown not by giving a PSPACE algorithm, but an alternating Turing machine algorithm and then using the fact that PSPACE = APTIME~\cite{CKS1981:Alternation}.
This means the algorithm given cannot easily be translated to an actual implementation.
The authors suggest that ``a model checking procedure for our succinct language may use BDD techniques of \cite{vBEGS215:SymDEL-LORI}''~\cite[p. 130]{CharSucc2017}.

\section{Comparison and Translations}\label{sec:comparison}

To compare BDDs and mental programs, we present a list of examples in \Cref{tab:examples}.
Each row stands for a class of relations or an operation on relations, and shows how it can be represented or executed, in the different representations.
Notably, observing more propositions, and thus going from the total relation (in row 2) to the identity (in row 5), leads to a shorter mental program, but to a longer Boolean formula.

\begin{table}[H]
  \centering
  \begin{tabular}{llll}
     & Mental program $\pi$                              & Boolean function $\Omega$                           & Observable $O$ \\
    \midrule
    empty relation          & $?\bot$                                           & $\bot$                                              & n/a            \\
    total relation          & $\change(V)$                                      & $\top$                                              & $\varnothing$  \\
    only observe $p$        & $\change(V \setminus \{p\})$                      & $p \leftrightarrow p'$                              & $\{p\}$        \\
    only observe $\{p, q\}$ & $\change(V \setminus \{p,q\})$                    & $(p\leftrightarrow p')\wedge (q\leftrightarrow q')$ & $\{p, q\}$     \\
    identity relation       & $?\top$                                           & $\bigwedge_{p \in V} (p \leftrightarrow p')$        & $V$            \\
    single edge $s \to t$   & $? \outof(s,V); \change(V); \outof(t,V)$          & $\outof(s, V) \land \outof(t', V')$                 & n/a            \\[0.5em]
    complement              & n/a                                               & Given $\beta$, use $\lnot \beta$                    & n/a            \\
    inverse                 & See translation in~\cite{GatSymSucShifts2020}.    & Given $\beta$, swap $p$ with $p'$ etc.              & no change      \\
    composition             & Given $\pi_1$ and $\pi_2$, use $\pi_1;\pi_2$      & See Def.~\ref{def:tr}                               & n/a            \\
    intersection            & Given $\pi_1$ and $\pi_2$, use $\pi_1 \cap \pi_2$ & Given $\beta_1$ and $\beta_2$, use $\beta_1 \land \beta_2$ & union   \\
  \end{tabular}
  \caption{Examples of relations and operations.}\label{tab:examples}
\end{table}

The ``n/a'' entries in \Cref{tab:examples} mean that this operation cannot be defined (in general) with the given representation.
For example, given a mental program we cannot easily define one for the complement (Row 7).
All mentioned operations can be done with Boolean functions, but with observational variables (last column) three operations are impossible because they yield non-equivalence relations.

\subsection{Translating Mental Programs to BDDs}

Given a mental program we now want to encode the same relation over subsets of $V$ using a Boolean function.
We give the following definition using Boolean formulas on the right side, but it also provides a mapping of mental programs to BDDs by reading all connectives on the right side as BDD operations.
Recall that $\Pi_V$ is the set of mental programs over $V$ and $\LL_B(V \cup V')$ is the set of Boolean formulas over the double vocabulary where the prime makes a fresh copy of each atomic proposition letter.

\begin{definition}\label{def:tr}
We now define a function $\tr: \Pi_V \rightarrow \LL_B(V \cup V')$.
\[
    \begin{array}{ll}
    \tr(p\leftarrow\top)  &\!\!\!\!\!:= p' \land \bigwedge_{p \neq q \in V} (q \leftrightarrow q') \\
    \tr(p\leftarrow\bot)  &\!\!\!\!\!:= \neg p' \land \bigwedge_{p \neq q \in V} (q \leftrightarrow q') \\
    \tr(\beta ?)          &\!\!\!\!\!:= \beta \land \bigwedge_{p \in V} (p \leftrightarrow p')
    \end{array}
    \begin{array}{ll}
    \tr(\pi_1\union\pi_2) &\!\!\!\!\!\!:= \tr(\pi_1) \vee \tr(\pi_2)\\
    \tr(\pi_1;\pi_2) &\!\!\!\!\!\!:= 
      [V'' \!\!\mapsto\!\! V'] (
        \exists V' (
          \tr(\pi_1)
          \land
          \tr(\pi_2)'
        ) 
      )\\
    \tr(\pi_1\cap\pi_2) &\!\!\!\!\!\!:= \tr(\pi_1)\wedge\tr(\pi_2)
    \end{array}
\]
\end{definition}

The left cases of \autoref{def:tr} are easy, they mostly consist of the ``do not change anything else'' conjunctions.
Among the right three cases, the one standing out is the composition $\pi_1;\pi_2$.
Here we write $[\cdot \mapsto \cdot]\phi$ for simultaneous substitution of atoms in $\phi$.
For example, $[\{p',q'\}\mapsto\{p,q\}](p' \land q'') = (p \land q'')$.
Strictly speaking in $[A \mapsto B]\phi$ both $A$ and $B$ are ordered lists and we use the implicit bijection between them as the substitution function~\cite[\href{https://malv.in/phdthesis/gattinger-thesis.pdf\#theorem.1.0.3}{Def.~1.0.3}]{GattingerThesis2018}.
We note that $\tr(\pi_2)'$ is $[V\!\!\mapsto\!\! V'][V'\!\!\mapsto\!\! V'']\tr(\pi_2)$, i.e.\ this changes the formula from the vocabulary $V \cup V'$ to the vocabulary $V' \cup V''$.
The Boolean quantification $\exists V'$ then eliminates all single-primed variables and lastly the outermost substitution ensures that the resulting formula is over the vocabulary $V \cup V'$ as we want.
We stress that $\exists$ is only expensive when done syntactically.
As a BDD operation it in fact deletes variables and nodes from a BDD~\cite{Bryant86}.

\begin{restatable}{theorem}{Correcttranslationbddtomp}\label{thm:Correcttranslationbddtomp}
The translation from Def.~\ref{def:tr} is correct: for any $s,t\subseteq V$, we have $s \xrightarrow{\pi} t$ iff $(s \cup t') \vDash \tr(\pi)$.
\end{restatable}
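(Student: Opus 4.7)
\noindent\emph{Proof proposal.}
The plan is to prove the equivalence $s \xrightarrow{\pi} t \iff (s \cup t') \vDash \tr(\pi)$ by structural induction on the mental program $\pi$. The seven cases of \autoref{def:mpSemantics} align one-to-one with the clauses of \autoref{def:tr}, so the induction hypothesis will be applied directly to the immediate subprograms. Throughout, I will use the convention that for any $u \subseteq V$, the set $u'$ is the $V'$-copy of $u$, so that $s \cup u' \vDash \phi$ means evaluating $\phi \in \LL_B(V \cup V')$ with $V$ interpreted by $s$ and $V'$ interpreted by $u$.

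The three base cases are routine unfoldings. For $\pi = p \leftarrow \top$, observe that $s \cup t' \vDash p' \wedge \bigwedge_{q \neq p}(q \leftrightarrow q')$ holds iff $p \in t$ and $s$ and $t$ agree on $V \setminus \{p\}$, which is exactly $t = s \cup \{p\}$. The case $p \leftarrow \bot$ is dual, and for $\beta?$ the conjunct $\bigwedge_{p \in V}(p \leftrightarrow p')$ forces $s = t$ while $\beta$ is over $V$ alone, so $s \cup t' \vDash \beta \iff s \vDash \beta$. The inductive cases for $\pi_1 \cup \pi_2$ and $\pi_1 \cap \pi_2$ are immediate, since both the semantic clause and $\tr$ commute with the corresponding Boolean connective, so we can apply the induction hypothesis componentwise.

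The sole non-trivial case, and the main obstacle, is composition $\pi_1; \pi_2$, because of the threefold vocabulary $V \cup V' \cup V''$ and the nested substitutions. Here I would argue as follows. Unfolding the outer substitution, $s \cup t' \vDash [V'' \mapsto V'] \exists V' (\tr(\pi_1) \wedge \tr(\pi_2)')$ holds iff $s \cup t'' \vDash \exists V'(\tr(\pi_1) \wedge \tr(\pi_2)')$, where $t''$ is the $V''$-copy of $t$, because the substitution propagates the value that $V''$ would have received from $V'$ in $s \cup t'$ onto the now-free $V''$. The existential quantifier then ranges over all $V'$-valuations, i.e.\ over all $u \subseteq V$ viewed as $u' \subseteq V'$, so the formula becomes equivalent to: there exists $u \subseteq V$ with $s \cup u' \cup t'' \vDash \tr(\pi_1) \wedge \tr(\pi_2)'$. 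Since $\tr(\pi_1) \in \LL_B(V \cup V')$ involves no $V''$, the first conjunct reduces to $s \cup u' \vDash \tr(\pi_1)$, and since $\tr(\pi_2)' = [V \mapsto V'][V' \mapsto V'']\tr(\pi_2)$ renames $V$ to $V'$ and $V'$ to $V''$, the second conjunct reduces to $u \cup t' \vDash \tr(\pi_2)$. Applying the induction hypothesis to each, we obtain the existence of $u$ with $s \xrightarrow{\pi_1} u$ and $u \xrightarrow{\pi_2} t$, which by \autoref{def:mpSemantics} is exactly $s \xrightarrow{\pi_1;\pi_2} t$. The care required is purely bookkeeping: stating a small auxiliary lemma that $v \vDash [A \mapsto B]\phi$ iff $v'' \vDash \phi$ where $v''$ extends $v$ by copying the values of $B$ onto $A$ would make the composition case entirely mechanical.
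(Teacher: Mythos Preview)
Your proposal is correct and follows essentially the same approach as the paper: structural induction on $\pi$, with the base cases and the $\cup$/$\cap$ cases dispatched routinely and the composition case handled by unwinding the substitutions through the triple vocabulary $V \cup V' \cup V''$. The only cosmetic difference is that the paper chains the equivalences from the semantic side ($s\xrightarrow{\pi_1;\pi_2}t$) toward the Boolean translation, whereas you start from $s\cup t' \vDash \tr(\pi_1;\pi_2)$ and peel off the outer $[V''\mapsto V']$ first; the intermediate steps and the use of the induction hypothesis coincide. (Minor slip: there are six cases in \autoref{def:mpSemantics}, not seven.)
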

\noindent\emph{Proof sketch.}
By induction on the structure of $\pi$ and applying \autoref{def:mpSemantics}. See appendix for proof.

Note that already translating $p \leftarrow \top$ needs $\mathcal{O}(|V|)$ many nodes in the BDD.
This illustrates the high length of formulas resulting from this translation.
However, it is not a problem specific to the particular translation given above, but applies to any correct translation, as the following theorem states.

\begin{restatable}{theorem}{Polyreductionmptobdd}\label{thm:Polyreductionmptobdd}
For any translation $\tr'$ from mental programs to BDDs there exists a mental program $\pi$ such that $\tr'(\pi)$ has size exponential in $|\pi|$.
\end{restatable}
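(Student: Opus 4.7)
\noindent\emph{Proof sketch.}
The plan is to exhibit a family $(\pi_n)_n$ of mental programs of size polynomial in $n$ whose associated relations $R_{\pi_n}$ inherently require a BDD of size $2^{\Omega(n)}$, so that no correct translation $\tr'$ can map $\pi_n$ to a sub-exponential BDD. The source of hardness is integer multiplication: by Bryant's classical lower bound~\cite{Bryant86} and its extension to the graph relation, the OBDD representing $\{(x, y, z) \in \{0,1\}^{n} \times \{0,1\}^n \times \{0,1\}^{2n} : z = x \cdot y\}$ has $2^{\Omega(n)}$ nodes under every variable ordering, since it must contain, as a sub-function, the OBDD of the hard middle output bit.

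First I would fix a vocabulary $V_n = X \cup Y \cup Z \cup G$ with $X, Y$ each holding $n$ factor bits, $Z$ holding the $2n$ product bits, and $G$ containing $O(n^2)$ auxiliary atoms naming the gates of a standard school-multiplication circuit. To simulate the circuit as a mental program, each gate $g = g_1\,\mathit{OP}\,g_2$ is translated into the fragment $((g_1\,\mathit{OP}\,g_2)?\,;\,g \leftarrow \top) \cup (\neg(g_1\,\mathit{OP}\,g_2)?\,;\,g \leftarrow \bot)$, and these fragments are concatenated in topological order with $;$. This yields $\pi_n$ of size $O(n^2)$, and by \autoref{def:mpSemantics} the relation $R_{\pi_n}$ is functional: every state $s$ has a unique successor $t$ with $t \cap X = s \cap X$, $t \cap Y = s \cap Y$, $t \cap G$ equal to the computed gate values, and $t \cap Z$ encoding the binary product of the numbers coded by $s \cap X$ and $s \cap Y$.

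For the lower bound, suppose toward contradiction that some correct $\tr'$ yields a BDD $B_n$ for $R_{\pi_n}$ of size polynomial in $|V_n|$. I would derive from $B_n$ a polynomial-size BDD for the graph of multiplication, contradicting the lower bound above. Three BDD operations suffice, and none enlarges the diagram: (i) restrict the input-side atoms of $G$ and $Z$ to $\bot$, since the program overwrites them; (ii) identify $X$ with $X'$ and $Y$ with $Y'$ by variable substitution, since they are equated by $R_{\pi_n}$ and substitution only merges nodes; and (iii) quantify out each atom of $G'$ one by one: because the BDD contains the local biconditional $g' \leftrightarrow h_{g'}(X, Y)$ for each such atom, the projection $\exists g'\, B_n$ simply deletes that biconditional and the diagram shrinks.

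The principal obstacle is step (iii): in general, existential quantification on OBDDs can blow up quadratically per variable, which would be fatal when applied to all $O(n^2)$ atoms of $G'$. The saving grace is that each quantified atom is functionally determined by the input atoms $X, Y$, so the quantification is essentially tautological. I would make this precise via the folklore observation that for any OBDD $B$ encoding $(g \leftrightarrow \phi(\mathbf{x})) \land \psi(\mathbf{x}, \mathbf{y})$, the projection $\exists g\, B$ has size at most that of $B$. As a backup, I would give a direct Bryant-style sub-function counting argument on $B_n$ itself: any variable ordering of $V_n \cup V_n'$ admits a cut at which $2^{\Omega(n)}$ distinct residual sub-functions of the embedded multiplication relation must survive, forcing $|B_n|$ to be exponential.
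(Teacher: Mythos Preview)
Your approach is substantially more elaborate than the paper's and, implicitly, aims at a stronger statement: you want the lower bound to hold under \emph{every} variable ordering. The paper, however, works throughout with vocabularies that carry a fixed order, so the BDD ordering is not a free parameter of $\tr'$. Under that assumption the paper's construction is essentially one line: take $\pi := \beta?$ where $\beta = (p_1 \land p_{n+1}) \lor \dots \lor (p_n \land p_{2n})$ is Bryant's formula whose OBDD already has $2^{n+1}$ nodes under the ordering $p_1,\dots,p_{2n}$. Then $|\pi|\in\mathcal{O}(n)$, and any correct $\tr'(\pi)$ must be the reduced OBDD of $\beta \land \bigwedge_{p\in V}(p\leftrightarrow p')$, which the paper argues still has at least $2^{n+1}$ nodes. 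No circuits, no auxiliary gate variables, no projections.

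Your argument has a genuine gap at step~(iii). The ``folklore observation'' that projecting out a functionally determined variable cannot increase OBDD size is \emph{false}. Over the ordering $g,x_1,x_2,u,v$ take $\phi := u$ and $\psi := (x_1 \land u \land \lnot v) \lor (x_2 \land \lnot u \land v)$; then the reduced OBDD for $(g \leftrightarrow \phi)\land\psi$ has $7$ internal nodes, while the OBDD for $\exists g.\,\bigl((g\leftrightarrow\phi)\land\psi\bigr) = \psi$ has $8$. The increase is tiny here, but you invoke this step $O(n^2)$ times in sequence, so even a constant-factor inflation per step would be fatal. Worse, the failure mode is exactly the one your construction exhibits: the determined variable splits $\psi$ into two cofactors that individually compress well but whose disjunction does not.

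Your backup --- a direct sub-function count on $B_n$ --- is not obviously wrong, but it is also not obviously right, and you have not supplied the argument. The auxiliary variables $G'$ you introduce are precisely the kind of slack that can make an OBDD \emph{smaller} than that of the underlying input--output relation; Bryant's multiplication lower bound is stated for the function on the $2n$ input bits and does not automatically transfer to the circuit-consistency relation over $O(n^2)$ variables. You would need a separate argument (for instance, that the partial-product layer $\bigwedge_{i,j}(g'_{i,j}\leftrightarrow x_i\land y_j)$ alone forces width $2^{\Omega(n)}$ at some cut, via a $K_{n,n}$ pathwidth bound), and this is the real work. In the paper's fixed-ordering setting none of this machinery is needed: the $\beta?$ construction already does the job with a linear-size mental program.
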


\noindent\emph{Proof sketch.}
  Take any vocabulary $V = \{p_1, \dots p_{2n}\}$ with this ordering fixed. Let $\pi := \beta ?$ where $\beta$ is the Boolean formula $(p_1 \land p_{n+1} ) \lor \ldots \lor (p_n \land p_{n+n})$ from page~\pageref{largeBddForm}.
  We show that the BDD corresponding to the mental program $\pi$ has at least $2^{n+1}$ many nodes.
  For the details, see the appendix.

\subsection{Translating BDDs to Mental Programs}

We write $(t_0 \dashleftarrow (p_i) \rightarrow t_1)$ to denote a node in a BDD that is labelled with variable $p$ and has an else-edge (dotted) pointing to node $t_0$ and a then-edge (solid) pointing to node $t_1$.
Note that $p_i$ may come from $V$ or from $V'$, and in the latter case we write $p_i'$.
For the leaves, we just write $\top$ and $\bot$.

\begin{definition}\label{def:btom}
We define a function
$\tau: BDD_{(V \cup V')} \times [V \cup V'] \to \Pi_V$ where $[V \cup V']$ is the set of all lists with elements from $V \cup V'$.
We distinguish different cases for the given BDD $\Omega$.
\[
    \tau(\bot, L)  := ? \bot
    \hspace{0.8cm}
    \tau(\top, [])  := ? \top
    \hspace{0.8cm}
    \tau(\top, [p_k, .., p_n]) := ((p_k \leftarrow \bot) \union (p_k \leftarrow \top)) ; \tau(\top,[p_{k+1}, .., p_n])
\]
\vspace*{-2em}
\begin{alignat*}{2}
    \tau(\Omega = (t_0 \dashleftarrow (p_i) \rightarrow t_1),[p_k, .., p_n]) &:=&&
      \twopartdef
      {(? \neg p_k;\tau(t_0,[p_k, .., p_n])) \union (? p_k;\tau(t_1,[p_k, .., p_n]))}{\text{if } i=k}
      {((p_k \leftarrow \bot) \union (p_k \leftarrow \top)) ; \tau(\Omega,[p_{k+1}, .., p_n])}{\text{otherwise }}
    \\
    \tau(\Omega = (t_0 \dashleftarrow (p_i') \rightarrow t_1), [p_k, .., p_n]) &:=&& \twopartdef
                                                                                     {\begin{array}{@{}l}
                                                                                       (p_k \leftarrow \bot;\tau(t_0,[p_{k+1}, .., p_n]))
                                                                                       \\
                                                                                       \union (p_k \leftarrow \top;\tau(t_1,[p_{k+1}, .., p_n]))
                                                                                     \end{array}
                                                                                     }{\text{if } i = k}
        {((p_k \leftarrow \bot) \union (p_k \leftarrow \top)) ; \tau(\Omega,[p_{k+1}, .., p_n])}{\text{otherwise}}
\end{alignat*}
Lastly, define $\tau_0 : BDD_{V \cup V'} \to \Pi_V$ by $\tau_0(\Omega) := \tau(\Omega, [p_0,\ldots,p_n])$.
\end{definition}

Both ``otherwise'' cases above are about variables $i > k$ not mentioned by BDD.
If the BDD does not mention a unprimed variable, this simply means the relation does not depend on that variable in the starting state, so the mental program also does not have to mention it.
On the other hand, not mentioning a primed variable in the BDD means we must allow the mental program to change that variable arbitrarily.

\begin{theorem}
The translation $\tau$ always terminates.
\end{theorem}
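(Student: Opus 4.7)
The plan is to exhibit a well-founded measure on the argument pair $(\Omega, L)$ that strictly decreases in each recursive call of $\tau$. A natural choice is the lexicographic order on $\mu(\Omega, L) := (|\Omega|, |L|)$, where $|\Omega|$ is the number of nodes of the BDD $\Omega$ and $|L|$ is the length of the list $L$. Since both components are natural numbers, this ordering is well-founded.

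First I would dispense with the base cases: $\tau(\bot, L)$ and $\tau(\top, [])$ return immediately without recursion. For the case $\tau(\top, [p_k, \ldots, p_n])$ with a non-empty list, the only recursive call is on $(\top, [p_{k+1}, \ldots, p_n])$, so $|\Omega|$ is preserved and $|L|$ strictly decreases, hence $\mu$ decreases lexicographically.

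Next I would go through the inductive cases for $\Omega = (t_0 \dashleftarrow (p_i) \rightarrow t_1)$, both unprimed and primed, splitting on whether $i = k$ or $i \neq k$. In the four subcases the measure strictly decreases as follows. If $i = k$ (unprimed), the recursive calls are on $(t_0, [p_k, \ldots, p_n])$ and $(t_1, [p_k, \ldots, p_n])$; since $t_0, t_1$ are proper sub-BDDs of $\Omega$, $|\Omega|$ strictly decreases while $|L|$ is preserved. If $i = k$ (primed), the recursive calls are on $(t_0, [p_{k+1}, \ldots, p_n])$ and $(t_1, [p_{k+1}, \ldots, p_n])$, and both components strictly decrease. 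In each of the two ``otherwise'' cases the sole recursive call is on $(\Omega, [p_{k+1}, \ldots, p_n])$, so $|\Omega|$ is preserved and $|L|$ strictly decreases. In every case $\mu$ strictly decreases in the lexicographic order, so by well-foundedness every call chain must be finite.

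I do not expect a serious obstacle here; the only thing to be careful about is that the two subcases where the list shrinks but the BDD stays fixed must not be lumped together with the subcases where the BDD shrinks but the list stays fixed, which is exactly why a single natural-number measure would fail and the lexicographic product is needed. One should also note that the definition assumes a standard ordered BDD, so the root variable index of any sub-BDD encountered is at least $k$, ensuring that only the ``if $i = k$'' and ``otherwise with $i > k$'' cases occur and the case analysis is exhaustive.
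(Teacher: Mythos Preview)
Your proof is correct and follows essentially the same idea as the paper, which simply observes that in each recursive call either $|\Omega|$ or $|L|$ strictly decreases while the other does not increase. One small remark: contrary to your closing comment, the simpler sum measure $|\Omega| + |L|$ already works here, since in every case you analyse the sum strictly decreases; the lexicographic product is therefore not actually needed.
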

\begin{proof}
We observe that at each recursive step either the size of the BDD $\Omega$ or the length of the list $L$ strictly decreases, while the other size stays the same.
\end{proof}

\begin{restatable}{theorem}{Bddtompcorrect}\label{thm:Bddtompcorrect}
  The translation $\tau_o$ from \Cref{def:btom} is correct.
  That is, given a vocabulary $V$, for any $s,t \subseteq V$, we have $s \xrightarrow{\tau_0(\Omega)} t$ iff $(s \cup t') \vDash \Omega$.
\end{restatable}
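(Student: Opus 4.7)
The plan is to prove a strengthened statement by induction on the lexicographic pair $(|\Omega|, |L|)$, and then derive \autoref{thm:Bddtompcorrect} as the special case $L = [p_0, \dots, p_n]$. Specifically, I claim that for every BDD $\Omega$ and every tail list $L = [p_k, \dots, p_n]$ arising during the recursion of $\tau$, and all $s, t \subseteq V$, one has $s \xrightarrow{\tau(\Omega, L)} t$ iff (a) $s$ and $t$ agree on $\{p_0, \dots, p_{k-1}\}$ and (b) $(s \cup t') \vDash \Omega$. For $k = 0$ condition (a) is vacuous and we recover the theorem. Well-foundedness of the induction is by the same observation already used for termination: each recursive call of $\tau$ strictly decreases one of $|\Omega|, |L|$ without increasing the other. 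The recursion maintains the invariant that every variable occurring in $\Omega$ is at least $p_k$ in the BDD ordering $p_0 < p_0' < p_1 < p_1' < \cdots$.

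First I would dispatch the three constant-BDD cases. The two leaves $\tau(\bot, L) = {?\bot}$ and $\tau(\top, [])$ are immediate (empty relation vs.\ identity). The recursive leaf case $\tau(\top, [p_k, \dots, p_n])$, which prepends ``change $p_k$ freely'', combines with the inductive hypothesis on the shorter list to produce exactly the agreement on $\{p_0, \dots, p_{k-1}\}$. The two $i = k$ internal-node cases are the conceptual heart but straightforward: for an unprimed node on $p_k$, the construct $({?\neg p_k}; \tau(t_0, L)) \cup ({?p_k}; \tau(t_1, L))$ mirrors Shannon expansion at $p_k$ in the source state, and the IH on $t_0, t_1$ applies with the same list since all of their variables are strictly later in the ordering than $p_k$; the primed case at $p_k'$ is symmetric, with the assignments $p_k \leftarrow \bot$ and $p_k \leftarrow \top$ fixing the target's value of $p_k$ and the IH being invoked on the shortened list $[p_{k+1}, \dots, p_n]$.

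The main obstacle will be the two ``otherwise'' cases of \autoref{def:btom}, where the BDD root's index $i$ does not match the list head index $k$. The key observation to exploit is that by the ordering invariant together with $i \ne k$ one has $i > k$, so neither $p_k$ nor $p_k'$ occurs in $\Omega$, and hence $(s \cup t') \vDash \Omega$ is insensitive to the $p_k$-values in $s$ and $t$. The prepended $((p_k \leftarrow \bot) \cup (p_k \leftarrow \top))$ then witnesses an intermediate state $u$ differing from $s$ only on $p_k$; the IH applied to $\tau(\Omega, [p_{k+1}, \dots, p_n])$ forces $u$ and $t$ to agree on $\{p_0, \dots, p_k\}$, from which agreement of $s$ and $t$ on $\{p_0, \dots, p_{k-1}\}$ follows by composition. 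The most delicate step is verifying that $(u \cup t') \vDash \Omega$ is equivalent to $(s \cup t') \vDash \Omega$, which is where the ordering invariant genuinely enters; the reverse direction of the biconditional, constructing a suitable $u$ from any $s, t$ meeting the claim, is immediate from \autoref{def:mpSemantics}.
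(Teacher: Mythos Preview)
Your proposal is correct and takes a genuinely different route from the paper. The paper first unravels the BDD to a tree and then argues by a double induction: an outer induction on the BDD structure with an inner induction on the vocabulary length, adding one variable $p_0$ at the front and reasoning pictorially about how the four subtrees below $p_0$ and $p_0'$ correspond to the four ``quadrants'' of the relation on the extended state space. Your approach instead stays with the original BDD, strengthens the induction hypothesis to quantify over the tail list $L = [p_k,\dots,p_n]$ (adding the agreement condition on $\{p_0,\dots,p_{k-1}\}$), and runs a single well-founded induction on $(|\Omega|,|L|)$ that mirrors the definition of $\tau$ case by case. The payoff of your decomposition is that the ``otherwise'' cases---where the BDD skips $p_k$ or $p_k'$---are handled explicitly and cleanly via the ordering invariant, whereas the paper's sketch implicitly assumes the complete-tree shape of its Figure and does not spell out what happens when a level is skipped. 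The paper's version, in turn, buys a more geometric picture of why the translation works (the quadrant decomposition of Figure~5), which your symbolic argument does not convey. Both are sound; yours is closer to a formal proof, the paper's is closer to an explanation.
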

\noindent\emph{Proof sketch.}
For any BDD $\Omega$, let $\Omega^\ast$ be its unraveling to a tree.
Note that we have $\tau(\Omega) = \tau(\Omega^\ast)$.
Hence for the proof we assume wlog.~that $\beta$ is a tree and proceed by induction over the tree structure.
For all details, see the appendix.

We now consider the length of the mental programs resulting from the translation.
Unfortunately, given a BDD for a vocabulary $V$ with $|V| = n$ the output of $\tau$ will always have at least length $\mathcal{O}(2^n)$.
The result may be simplified using equivalences such as $p \leftarrow \top; p?  \equiv  p \leftarrow \top$ and $p \leftarrow \bot; p?  \equiv  \bot?$.
But even including such simplifications we believe that there is no better translation in the following sense.

\begin{conjecture}\label{conj:mpToBddLarge}
For any translation $\tau'$ from BDDs to mental programs that is correct in the sense of \autoref{thm:Bddtompcorrect}, there exists a BDD $\beta$ such that $\tau'(\beta)$ has a length exponential in the number of nodes of $\beta$.
\end{conjecture}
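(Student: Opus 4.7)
The plan is to exhibit an explicit family of BDDs $(\beta_n)_{n \geq 1}$ over a vocabulary of size $\Theta(n)$ whose node count is polynomial in $n$, but for which every equivalent mental program must have length at least $2^{\Omega(n)}$. I would proceed in two stages: first identify a concrete candidate relation whose BDD is small, then design a semantic lower-bound argument that no mental program of subexponential length can compute it.

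For the candidate, a natural choice is a counting relation on the symmetric difference, for example $R_n = \{(s,t) \in \cP(V_n) \times \cP(V_n) : |s \triangle t| = \lfloor n/2 \rfloor\}$, or a parity-of-symmetric-difference variant. The corresponding Boolean function over $V_n \cup V_n'$ is a threshold applied to the bits $p_i \leftrightarrow p_i'$, which admits a polynomial OBDD of size $O(n^2)$ under a suitable interleaved variable order. The intuition for the hardness is that a mental program has no auxiliary state beyond $V_n$: after executing a prefix, the current subset must encode all information relevant to the remainder, so tracking how many flips have occurred without overwriting already-assigned bits appears to force an explicit enumeration of the flip patterns, of which there are $\binom{n}{\lfloor n/2\rfloor} = 2^{\Omega(n)}$.

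The key technical step is a lower-bound argument. I would introduce a semantic complexity measure $\mu$ on relations over $\cP(V)$ such that $\mu(R_\pi)$ is bounded by a polynomial (ideally linear) function of $|\pi|$, while $\mu(R_n)$ is exponential in $n$. The measure must behave sub-additively with respect to the constructors in \autoref{def:mpSemantics}: $\mu(R_{\pi_1 \cup \pi_2}), \mu(R_{\pi_1 \cap \pi_2}), \mu(R_{\pi_1;\pi_2}) \leq \mu(R_{\pi_1}) + \mu(R_{\pi_2}) + O(1)$, while the atoms $\beta?$ and $p \leftarrow \top/\bot$ must have small $\mu$. Candidates for $\mu$ include a distinguishing-set or rank-based measure on the relation matrix $M_R \in \{0,1\}^{2^n \times 2^n}$, or a communication complexity measure adapted to the transducer-like semantics of mental programs.

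The main obstacle, and the reason the statement is left as a conjecture, is controlling the effect of composition $\pi_1;\pi_2$, whose semantics quantifies existentially over an intermediate state. In related settings (regular expressions, succinct automata, PDL) composition is precisely the operator that enables exponential succinctness, so designing a semantic measure that stays small under composition while being large on $R_n$ is delicate. A secondary obstacle is that any proposed $\mu$ must be semantic rather than syntactic so as to also rule out simplifications such as $p \leftarrow \top;\,p? \equiv p \leftarrow \top$ noted after \autoref{thm:Bddtompcorrect}. Finally, a pure counting argument does not suffice: comparing the number of BDDs of size $\leq k$ over $V \cup V'$ (which is $2^{O(k \log(nk))}$) with the number of mental programs of length $\leq \ell$ (which is $2^{O(\ell \log n)}$) only forces $\ell = \Omega(k \log k / \log n)$, far short of the conjectured exponential gap, so a structural rather than cardinal argument is required.
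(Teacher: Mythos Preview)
The statement is a \emph{conjecture} in the paper, and the paper does not prove it. After stating the conjecture, the authors only offer a candidate witness and an informal argument: they take the relation $R = \{(s,s) \mid |s| = \lceil |V|/2 \rceil\}$ (the identity restricted to states where exactly half the vocabulary is true), observe that its BDD is a $|V|\times|V|$ grid, and then say that any mental program ``needs to count, and at the same time preserve the valuation,'' and that ``as there are no additional variables the only way to count without forgetting is to try all valuations.'' They explicitly write that it is ``not clear whether there exists a shorter mental program'' and leave the claim open. So there is no proof in the paper for you to match.

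Your proposal is in the same spirit but goes further methodologically. Your candidate family (a threshold on $|s \triangle t|$) is different from the paper's but has the same counting flavour; in fact the paper's example is essentially the special case $|s \triangle t| = 0$ restricted to a slice, so the two choices are closely related. Your intuition about the lack of auxiliary state forcing enumeration is exactly the paper's intuition, just phrased more carefully. Where you add value is in naming what a formal argument would need: a semantic complexity measure that is subadditive under $\cup$, $\cap$, $;$ and small on the atoms, and your correct identification of the sequential composition $;$ as the crux, since its existential intermediate state is precisely what could give mental programs unexpected succinctness. You are also right that a pure counting argument is too weak. None of this is in the paper.

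In short: you have not proved the conjecture, but neither has the paper; your discussion is a strict elaboration of the paper's, with a slightly different (and arguably cleaner) candidate relation and a clearer articulation of the technical obstacle. There is no gap to point to beyond the one you already name, which is the same gap that makes this a conjecture rather than a theorem.
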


As an example, take the BDD encoding the relation
$R = \{ (s,s) \mid |s| = \lceil\frac{|V|}{2}\rceil \}$,
i.e.~the identity restricted to the states where exactly half of the vocabulary is true.
The BDD encoding $R$ is a $|V|\cdot|V|$ grid.
We can find a mental program encoding this relation using (the proof of) \autoref{lemma:mpCanExpressAll},
but the size of the mental program is $2^{|V|/2} \cdot |V|$.
It is not clear whether there exists a shorter mental program that encodes $R$.
Essentially the mental program needs to count, and at the same time preserve the valuation.
But as there are no additional variables the only way to count without forgetting is to try all valuations.
Thus we believe that any mental program $\pi$ encoding $R$ must have size exponential in the size of the $BDD$.

\section{Conclusions and Future Work}\label{sec:conclusion}

In this paper, we showed that the symbolic model-checking tasks for \EL, \PAL and \DEL on both knowledge structures and belief structures are all PSPACE-complete.
In addition we compared how the same relations can be encoded in belief structures with BDDs on one hand, and in succinct models with mental programs (a regular language like syntax) on the other hand.
That is, we provided translations from mental programs to BDDs, and back.
We have also proven that any such translation from mental programs to BDDs will lead to an exponential blowup in size, and we conjecture the same for the other direction from BDDs to mental programs.

An aspect we did not consider here but that is relevant for epistemic planning are multi-pointed models and actions.
Concerning implementations, a \PAL model checker using mental programs was implemented in~\cite{Hartlief2020}.
In parallel to the theoretical work presented here we have improved the performance of this code and implemented the translations from \autoref{sec:comparison}.
We plan to benchmark all methods and eventually merge them into SMCDEL in the future.

\bibliographystyle{eptcs}
\bibliography{refs}

\clearpage
\appendix

\section{Proofs}
\PALcheckCorrect*
\begin{proof}
  By induction on the input size and case distinction on $\phi$.
  We only show the interesting cases for knowledge and for announcements.

  \smallskip
  \noindent\textbf{Case} $\phi = K_i \phi_1$.
  We have:
  \begin{align*}
    \F^{\ell_0\ldots\ell_k},s\vDash K_i\phi_1 \ &\mbox{iff }\forall t\subseteq V: s\cap O_i = t\cap O_i\mbox{ and $t$ survives in }\F^{\ell_0\ldots\ell_k}\mbox{ implies }\F^{\ell_0\ldots\ell_k},t\vDash\phi_1\\
                                                &\mbox{iff }\forall t\subseteq V: s\cap O_i = t\cap O_i\mbox{ and }\bigwedge_{i=0}^k \F^{\ell_0\ldots\ell_{i-1}},t\vDash\ell_i\mbox{ implies }\F^{\ell_0\ldots\ell_k},t\vDash\phi_1
  \end{align*}
  Note that, assuming for a $t\subseteq V$, $s\cap O_i = t\cap O_i$, $\bigwedge_{i=0}^k \F^{\ell_0\ldots\ell_{i-1}},t\vDash\ell_i$ if and only if \texttt{stillExists}$=\Tr$ after the execution of the loop in line~\ref{algoline:verifytsurv}.
  This is so by using IH as for any $i\in [1,|L|-1]$, $|\F| + |\ell_0|+\ldots+\ell_{i-1} + \ell_i < |\F| + \sum_{j=0}^k |\ell_j| + |\phi_1| + 1$. Line~\ref{algoline:checkindisting} checks whether $s\cap O_i = t\cap O_i$. Hence 
  \begin{align*}
    &\F^{\ell_0\ldots\ell_k},s\vDash K_i\phi_1\\
    \text{iff } &\forall t\subseteq V:\mbox{line~\ref{algoline:checkindisting} and \texttt{stillExists} is true after loop~\ref{algoline:verifytsurv} implies }\F^{\ell_0\ldots\ell_k},t\vDash\phi_1\\
    \text{iff } &\forall t\subseteq V:\mbox{line~\ref{algoline:checkindisting} and \texttt{stillExists} is true after loop~\ref{algoline:verifytsurv} implies }\chk(\F[\ell_0\ldots\ell_k],t,\phi_1)\neq\Fa
  \end{align*}

  \noindent\textbf{Case} $\phi = [!\psi]\phi_1$.
  We have:
  \begin{align*}
    \F^{\ell_0\ldots\ell_k},s\vDash[!\psi]\phi_1 \ &\mbox{iff }\F^{\ell_0\ldots\ell_k},s\vDash\psi\mbox{ implies }\F^{\ell_0\ldots\ell_k\psi},s\vDash\phi_1\\
                                                   &\mbox{iff }\chk(\F,L,s,\psi) = \Tr\mbox{ implies }\chk(\F,L ++ [\psi], s,\phi_1) = \Tr,\mbox{by IH}
  \end{align*}
  Note that we make two recursive calls here.
  We can apply the IH to both because the input size strictly decreases:
  $|\F| + \sum_{i=0}^k |\ell_i| + |\psi| < |\F| + \sum_{i=0}^k |\ell_i| + |\psi| + |\phi_1| + 1$ and $|\F| + \sum_{i=0}^k |\ell_i| + |\psi| + |\phi_1| < |\F| + \sum_{i=0}^k |\ell_i| + |\psi| + |\phi_1| + 1$.

  The $\chk(\F,L,s,\psi)$ is evaluated in line~\ref{algoline:pubannounceantecedent}.
  If it is $\Fa$, then $\chk(\F,[\ell_0,\ldots,\ell_k],s,[!\psi]\phi_1) = \Tr$ by line~\ref{algoline:pubannouncevacutrue}.
  Hence,
  $
  \F^{\ell_0\ldots\ell_k},s\vDash[!\psi]\phi_1\mbox{ iff }\chk(\F,[\ell_0,\ldots,\ell_1],s,[!\psi]\phi_1)=\Tr
  $.
\end{proof}

\DELcheckCorrect*
\begin{proof}
  We show the more general claim that for any list $L = [(\X_0,x_0),\ldots,(\X_k,x_k)]$ of events we have
  $(\F,s) \otimes (\X_0,x_0) \otimes \ldots \otimes (\X_k,x_k) \vDash \phi$
  iff $\chkfcdelK(\F,L,s,\phi)$ returns $\Tr$.
  The proof is by induction on $\phi$, and we omit the easy cases for negation and conjunction.
  \smallskip

  \noindent\textbf{Case} $\phi = p$.
  We have the following equivalences:
  \begin{align*}
    & \F^{\ell_0\ldots\ell_k},s\vDash p\\
    \text{iff } &p \in s^k=(s^{k-1} \setminus V_{k,-}) \cup (s^{k-1} \cap V_{k,-})^\circ \cup x^k \cup \{p \in V_{k,-} \mid s^{k-1} \cup x_k \vDash \theta_{k,-} (p)\},\mbox{by \autoref{d:trf}}\\
    \text{iff } &p \in s^{k,'}=(s^{k-1} \setminus V_{k,-}) \cup \{ q \in V_{k,-} \mid \chkfcdelK(\F, [], s, [x_j \sqsubseteq V_k ^+] \theta_{k,-}(q)) \},\mbox{because $p \in V$}\\
    \text{iff } &\chkfcdelK(\F,L,s,p) = \Tr,\mbox{by definition of checkDELK}
  \end{align*}
  Note the difference between $s^k$ and $s^{k,'}$ here. $s^{k,'}$ removes the copy variables and the event state (of which variables are from the new event vocabulary) while the last part both represent the modified propositions whose precondition was true in the previous state. However, by assumption $p \in V$, therefore we can simply remove these two parts.

  \smallskip
  \noindent\textbf{Case} $\phi = K_i \psi$.
  We have the following equivalences:
  \begin{align*}
    \ & \F^{\ell_0\ldots\ell_k},s\vDash K_i\psi\\
    \text{iff } & \mbox{for all } t \mbox{ in the final model } s^{final} \cup t' \vDash \Omega_i^{final} \mbox{ implies } \F^{final}, t \vDash \psi\\
    \text{iff } & \forall t \subseteq V \mbox{ s.t. } t \vDash \theta, t_0^+ \subseteq V_0^+,...,t_k^+ \subseteq V_k^+:\\
    & \ s \cup t'\vDash \Omega_i, x_1^+ \cup t_1^+ \vDash \Omega_{0,i}^+,..,x_k^+ \cup t_k^+ \vDash \Omega_{k_i} \mbox{ and $t$ survives implies }\F^{\ell_0\ldots\ell_k},t\vDash\psi\\
    \text{iff } & \forall t\subseteq V \mbox{ s.t. } t \vDash \theta, t_0^+ \subseteq V_0^+,...,t_k^+ \subseteq V_k^+:\\
    &\ s \cup t'\vDash \Omega_i, x_1^+ \cup t_1^+ \vDash \Omega_{0,i}^+,..,x_k^+ \cup t_k^+ \vDash \Omega_{k_i} \mbox{ and }\bigwedge_{i=0}^k \F^{\ell_0\ldots\ell_{i-1}},t\vDash [t_i^+ \sqsubseteq V_i ^+] \theta_i ^+ \mbox{ implies }\F^{\ell_0\ldots\ell_k},t\vDash \psi \\
    \text{iff } & \chkfcdelK(\F,L,s,K_i \psi) = \Tr
  \end{align*}

  \noindent\textbf{Case} $\phi = [\mathcal{X}, x] \psi$.
  We have the following equivalences:
  \begin{align*}
    &\F^{\ell_0\ldots\ell_k},s\vDash[\mathcal{X}, x]\psi\\
    \text{iff } &\mbox{if }\F^{\ell_0\ldots\ell_k},s\vDash[x \sqsubseteq V ^+] \theta ^+\mbox{ then }(\F^{\ell_0\ldots\ell_k} \times \mathcal{X},s^x) \vDash\psi\\
    \text{iff } &\mbox{if }\chkfcdelK(\F,L,s,[x \sqsubseteq V ^+] \theta ^+) = \Tr\mbox{ then }\chk(\F,L ++ [(\mathcal{X},x)], s,\psi) = \Tr,\mbox{by IH}
      \qedhere
  \end{align*}
\end{proof}

\Correcttranslationbddtomp*
\begin{proof}
By induction on the structure of $\pi$ and applying \autoref{def:mpSemantics}.
\begin{itemize}
\item $s \xrightarrow{p \leftarrow \top} t \iff s \cup t' \vDash p' \land \bigwedge_{p \neq q \in V} (q \leftrightarrow q')$, since $p \in t$ and other variables remain the same.
\item $s \xrightarrow{p \leftarrow \bot} t \iff s \cup t' \vDash \neg p' \land \bigwedge_{p \neq q \in V} (q \leftrightarrow q')$, since $p \not\in t$ and other variables remain the same.
\item $s \xrightarrow{\beta?} t \iff s = t \land s \vDash \beta \iff s \cup t' \vDash \beta \land \bigwedge_{p \in V} (p \leftrightarrow p')$, since $\beta$ must be true in $s = t$.
\item $s \xrightarrow{p \cup q} t \iff s \xrightarrow{p} t $ or $s \xrightarrow{q} t \iff s \cup t' \vDash tr(p)$ or $s \cup t' \vDash tr(q) \iff s \cup t' \vDash tr(p) \vee tr(q)$.
\item 
$\begin{array}[t]{@{}rcl}
s \xrightarrow{\pi_1;\pi_2} t &
\iff & \exists u \subseteq V: s\xrightarrow{\pi_1} u\mbox{ and } u\xrightarrow{\pi_2} t\\
& \iff & \exists u \subseteq V: s\union u'\vDash\tr(\pi_1)\mbox{ and } u\union t'\vDash\tr(\pi_2)\\
& \iff & \exists u \subseteq V: s\union u'\vDash\tr(\pi_1)\mbox{ and } u'\union t''\vDash[V\mapsto V', V'\mapsto V'']\tr(\pi_2)\\
& \iff & \exists u \subseteq V: s \union u' \union t'' \vDash\tr(\pi_1) \land [V\mapsto V', V'\mapsto V'']\tr(\pi_2)\\
& \iff & s \union t'' \vDash  \exists V' (\tr(\pi_1) \land [V\mapsto V', V'\mapsto V'']\tr(\pi_2))\\
& \iff & s \cup t' \vDash [V'' \mapsto V'] (
        \exists V' (
          \tr(\pi_1)
          \land [V\mapsto V'][V'\mapsto V'']\tr(\pi_2)
        ) 
      )
\end{array}$.
\item $s \xrightarrow{p \cap q} t \iff s \xrightarrow{p} t$ and $s \xrightarrow{q} t$
  $\iff s \cup t' \vDash tr(p)$ and $s \cup t' \vDash tr(q) \iff s \cup t' \vDash tr(p) \land tr(q)$
  \qedhere
\end{itemize}
\end{proof}

\Polyreductionmptobdd*
\begin{proof}
Take a vocabulary $V = \{p_1, \dots p_{2n}\}$ with this ordering fixed.
Consider again the formula $\beta := (p_1 \land p_{n+1} ) \lor \ldots \lor (p_n \land p_{n+n})$ from page~\pageref{largeBddForm}.
Let $\pi := \beta ?$.
Then $\pi$ also has length in $\mathcal{O}(n)$. From Definition 16, $\tr(\pi) = \beta \land \bigwedge_{p \in V} (p \leftrightarrow p')$. We show that the BDDs corresponding to this boolean formula has at least $2^{n+1}$ many nodes.
First, as discussed in~\cite[p.~681]{Bryant86}, the BDD that corresponds to the $\beta$ formula has at least $2^{n+1}$ many nodes.

\begin{center}
\begin{minipage}[b]{0.3\textwidth}
  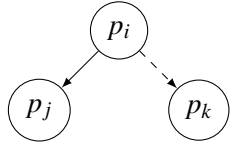
\begin{figure}[H]
    \centering
    \begin{tikzpicture}[>=latex, node distance=1.5cm]
      \node (p0) [draw,circle] {$p_i$};
      \node (p1) [draw,circle, below left of=p0] {$p_j$};
      \node (p2) [draw,circle, below right of=p0] {$p_k$};
      \draw (p0) [->,solid] -- (p1);
      \draw (p0) [->,dashed] -- (p2);
    \end{tikzpicture}
    \caption{Node $p_i$ in $\beta$'s $\mathsf{BDD}$.}\label{fig-bddofbeta}
  \end{figure}
\end{minipage}
\hspace{1cm}
\begin{minipage}[b]{0.5\textwidth}
  \begin{figure}[H]
    \centering
    \begin{tikzpicture}[>=latex, node distance=1.5cm]
      \node (p0) [draw,circle] {$p_i$};
      \node (p1) [draw,circle, below left of=p0] {$p_i'$};
      \node (p2) [draw,circle, below right of=p0] {$p_i'$};
      \node (p3) [draw,circle, below left of=p1] {$p_j$};
      \node (p4) [draw,circle, below right of=p2] {$p_k$};
      \node (bot) [draw,rectangle, below right of=p1] {$\bot$};
      \draw (p0) [->,solid] -- (p1);
      \draw (p0) [->,dashed] -- (p2);
      \draw (p1) [->,solid] -- (p3);
      \draw (p2) [->,dashed] -- (p4);
      \draw (p2) [->,solid] -- (bot);
      \draw (p1) [->,dashed] edge (bot);
    \end{tikzpicture}
    \caption{Node $p_i$ in $\beta \land \bigwedge_{p \in V} (p \leftrightarrow p')$'s $\mathsf{BDD}$.}\label{fig-bddofconjuncts}
  \end{figure}
\end{minipage}
\end{center}

Now we examine the size of the BDDs that corresponds to $\beta \land \bigwedge_{p \in V} (p \leftrightarrow p')$. Since none of the copy variables $p_i'$ appear in $\beta$, it will not cause any subtrees to be merged. Instead, as shown in Figure~\ref{fig-bddofconjuncts}, for each node $p_i$ that appears in $\beta's$ BDD in Figure~\ref{fig-bddofbeta}, two more nodes of $p_i'$ are added with the solid and dashed edges encoding the $(p_i \leftrightarrow p_i')$ sub-formula in $\pi$. Therefore, the BDD for $\pi$ has at least $2^{n+1}$ many nodes.
\end{proof}

\Bddtompcorrect*
\begin{proof}
For any BDD $\Omega$, let $\Omega^\ast$ be its unraveling to a tree.
Note that we have $\tau(\Omega) = \tau(\Omega^\ast)$. Hence for the proof we assume wlog.~that $\beta$ is a tree.

We proceed by induction over the structure of BDDs.
In the case when the BDD is a single node $\bot$, the mental program is $? \bot$, which represents the empty relations. When the BDD is a single node $\top$, the mental program is $((p_0 \leftarrow \bot) \union (p_0 \leftarrow \top)) ; ... ;((p_n \leftarrow \bot) \union (p_n \leftarrow \top)) ; ? \top$, which represents the complete relations. The translation is correct for both of these base cases.

The trickier case is when the BDD has at least three nodes (including $\top$ and $\bot$).
We proceed by an inner induction on the length of the vocabulary.
The base case is when the vocabulary is empty and the only BDDs that can be constructed are just $\bot$ and $\top$nodes. As shown above, the translation is correct for both cases. 

For the induction step we are adding one more variable $p_0$ at the beginning of the vocabulary $V$.
We have the inductive hypothesis that for all $t_i$ that can be constructed with the vocabulary $V=[p_1,..,p_k]$, we have $s \xrightarrow{\tau (t_0,V)} t \iff s \cup t' \vDash t_0$.

We prove that the translation remains correct for the extended vocabulary $V_1 := \{p_0\} \cup V$ by showing that both $t_0$ under $V$ and $\tau (t_0, V)$ encode the same relations.

Given the new vocabulary, the unraveled BDDs that can be constructed all have the shape as shown in Figure~\ref{fig-bddtree}, where the $t_i$s are subtrees which can be seen as the complete trees constructed in the original vocabulary $V$. We can construct the corresponding relations in Figure~\ref{fig-kripke}. Specifically, after extending the vocabulary, we make a copy of the states in the original vocabulary $V$ and add $p_0$ to each of these copies to indicate that $p_0$ is true in these states.
The states in which $p_0$ is false are put on the bottom level in the figure while the states in which $p_0$ is true are on the top level. We construct the corresponding relations as follows: 
\begin{enumerate}
    \item Construct the relations corresponding to $t_1$ in the original vocabulary $V$ on the bottom level. 
    \item Construct the relations corresponding to $t_2$ in $V$ on the bottom level and then shift the source states of the relations from the bottom level to the corresponding states (i.e.\ those that differ only by $p_0$.) on the top level. In Figure~\ref{fig-kripke} these relations are depicted as the dotted top-to-bottom arrows.
    \item Construct the relations corresponding to $t_3$ in $V$ on the bottom level and then shift the target states of the relations from the bottom level to the corresponding states on the top level. In Figure~\ref{fig-kripke} these relations are depicted as the dotted bottom-to-top arrows.
    \item Construct the relations corresponding to $t_4$ in $V$ on the bottom level and then shift both the source and target states of the relations from the bottom level to the corresponding states on the top level. 
\end{enumerate}

Each branch $t_i$ in Figure~\ref{fig-bddtree} corresponds to part of the relation in Figure~\ref{fig-kripke}.

\begin{center}
\begin{minipage}[b]{0.4\textwidth}
\begin{figure}[H]
  \centering
  \begin{tikzpicture}[>=latex, node distance=1.8cm]
    \node (p0) [draw,circle] {$p_0$};
    \node (p1) [draw,circle, below left of=p0] {$p_0'$};
    \node (p2) [draw,circle, below right of=p0] {$p_0'$};
    \node (t1) [draw,, below left of=p1] {$t_1$};
    \node (t2) [draw,, right of=t1] {$t_2$};
    \node (t4) [draw,, below right of=p2] {$t_4$};
    \node (t3) [draw,, left of=t4] {$t_3$};
    \draw (p0) [->,solid] -- (p1);
    \draw (p1) [->,dashed] -- (t2);
    \draw (p1) [->,solid] -- (t1);
    \draw (p0) [->,dashed] -- (p2);
    \draw (p2) [->,solid] -- (t3);
    \draw (p2) [->,dashed] -- (t4);
  \end{tikzpicture}
  \caption{The BDD after $p_0$ is added.}\label{fig-bddtree}
\end{figure}
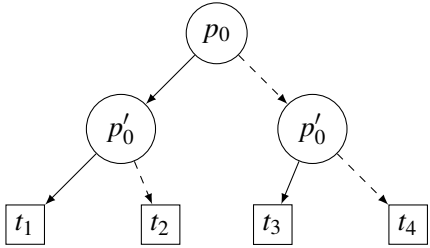
\end{minipage}%
\hspace{5mm}%
\begin{minipage}[b]{0.5\textwidth}
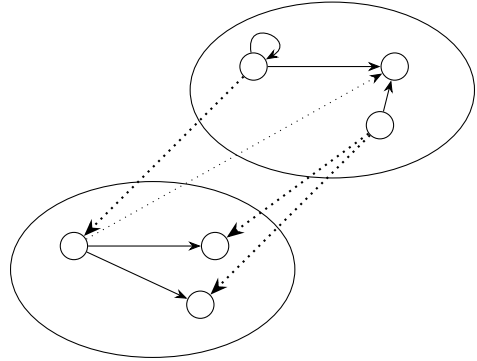
\begin{figure}[H]
  \centering
  \begin{tikzpicture}[
node distance=2cm,
main node/.style={circle, draw, minimum size=0.1cm},
>={Stealth[scale=1.0]}, scale = 0.9
]

% Nodes for the first cluster
\node[main node] (A1) {};
\node[main node, right=of A1,xshift=-0.5cm] (A2) {};
\node[main node, below right=of A1, yshift=0.9cm] (A3) {};

% Nodes for the second cluster
\node[main node, below left=3cm of A1] (B1) {};
\node[main node, right=of B1,xshift=-0.5cm] (B2) {};
\node[main node, below right=of B1, yshift=0.9cm] (B3) {};

% Arrows within the first cluster
\draw [->] (A1) to [out=100,in=30,looseness=6,->] (A1);
\draw[->] (A1) -- (A2);
\draw[->] (A3) -- (A2);

% Arrows within the second cluster
\draw[->] (B1) -- (B2);
\draw[->] (B1) -- (B3);

% Arrows from cluster 1 to cluster 2
\draw[->, dotted, thick] (A1) -- (B1);
\draw[->, dotted, thick] (A3) -- (B2);
\draw[->, dotted, thick] (A3) -- (B3);

% Arrows from cluster 2 to cluster 1
\draw[->, dotted, thin] (B1) -- (A2);

% Calculate the centroid of the first cluster
\coordinate (centroidA) at ($(A1)!0.333!(A2)!0.4!(A3)$);

% Calculate the centroid of the second cluster
\coordinate (centroidB) at ($(B1)!0.333!(B2)!0.4!(B3)$);

% Ellipses around clusters
\draw (centroidA) ellipse (2.1cm and 1.3cm);
\draw (centroidB) ellipse (2.1cm and 1.3cm);

\end{tikzpicture}

  \caption{The corresponding Kripke model.}\label{fig-kripke}
\end{figure}
\end{minipage}
\end{center}

From inductive hypothesis we know that the translations are correct in the original vocabulary, which means that $\pi_i$ and $t_i$ encode the same relations in the original vocabulary $V$.
Assume $\tau(t_i,V) = \pi_i$, then $\tau(t_0,V_1)= ?\neg p_0 ; ((p_0 \leftarrow \bot;\pi_1) \cup (p_0 \leftarrow \top;\pi_2)) \cup ?p_0 ; ((p_0 \leftarrow \bot; \pi_3) \cup (p_0 \leftarrow \top;\pi_4)) = (?\neg p_0 ; \pi_1) \cup (?\neg p_0 ;(p_0 \leftarrow \top);\pi_2) \cup (?p_0;(p_0 \leftarrow \bot); \pi_3) \cup (?p_0;\pi_4)$. Similar to the reasoning in the previous part, we can see that this represents the same relation in Figure~\ref{fig-kripke}, e.g.\ the relations corresponding to $\pi_2$ go from bottom to top. Therefore, the BDD $t_0$ and the mental program $\tau(t_0, V_1)$ encode the same relation in the extended vocabulary $\{p_0\} \cup V$.
This concludes the proof that the translation is correct.
\end{proof}

\end{document}